\newcommand{\myauthor}{David Monniaux}
\newcommand{\mytitle}{Applying the Z-transform for the static analysis of
  floating-point numerical filters}
\newcommand{\mykeywords}{Z-transform, linear filters, compositional semantics, numerical computation, floating-point}
\title{\mytitle}
\author{\myauthor}
\newcommand{\bbR}{\mathbb{R}}
\newcommand{\bbN}{\mathbb{N}}
\newcommand{\bbC}{\mathbb{C}}
\newcommand{\bbQ}{\mathbb{Q}}
\newcommand{\e}{\mathrm{e}}
\newcommand{\eps}{\varepsilon}
\newcommand{\abs}{\textrm{abs}}
\newcommand{\rel}{\textrm{rel}}
\newcommand{\ratring}[1]{#1[z]_{(z)}}
\newcommand{\ratringr}{\ratring{\bbQ}}
\newcommand{\matrices}{\mathcal{M}}
\newcommand{\minor}{\textrm{minor}}
\newcommand{\Id}[1]{\textrm{Id}_#1}
\newtheorem{thm}{Theorem}
\newtheorem{lem}[thm]{Lemma}
\newtheorem{cor}[thm]{Corollary}
\newcommand{\TM}{$^{\text{TM}}$}
\begin{document}
\maketitle

\section{Introduction}
The static analysis of control/command programs, with a view to proving
the absence of runtime errors, has recently picked up steam, with the
inception of analyzers capable of scaling up to real industrial
programs. In particular, it is nowadays possible to build \emph{sound}
and \emph{precise} static analyzers scaling up to realistic industrial
situations. A static analyzer takes as input a program (source code or
object code) and outputs a series of facts, warnings or other
indications obtained by automatic analysis of that program.

A static analyzer is said to be sound if all the facts
that it derives from a program (say, ``variable \texttt{x} is always
positive'') are always true, regardless of how and on which inputs the
program is run. Sound static analyzers are based on a
\emph{semantics}, that is, a mathematical definition of possible
program executions.

It is well-known that any method for program verification cannot be at
the same time sound (all results produced are truthful), automatic (no
human intervention), complete (true results can always be proved)
and terminating (always produces a
result)
\footnote{The formal version of this result is a classic of recursion
  theory, known as Rice's theorem.}
unless one supposes that
program memory is finite and thus that the system is available to
model-checking techniques. As a result, sound static analyzers
are bound to produce \emph{false alarms} sometimes; that is, warnings
about issues that cannot happen in reality. One thus wants analyzers
that are \emph{precise}, that is, model reality so closely that they seldom
produce false alarms --- but also, one wants analyzers that are
efficient, taking only reasonable amounts of time and memory to
perform an analysis.

One crucial class of errors for control/command systems
is arithmetic overflows --- say,
when converting some value to an integer --- in programs using
floating-point computations.
Such errors have already
proved to be extremely dangerous, having for instance caused the
explosion of the Ariane~5 on its maiden flight \cite{Ariane501}. In
order to prove the absence of such errors, static analyzers such as
Astrée%
\footnote{\url{http://www.astree.ens.fr}}
\cite{BlanchetCousotEtAl02-NJ,BlanchetCousotEtAl_PLDI03} have
to bound all floating-point variables in the program. It is impossible
to do so using simple interval arithmetic; in order to bound the
output of a numerical filter, one has to make the analyzer understand
the stability conditions of the numerical processing implemented in
the application to be analyzed.

In current control/command designs, it is commonplace that the
executable is obtained by compiling C code, or assembly code, itself
obtained by automatic translation from a high-level
specification. This high-level specification is typically given 
in a high-level language such as Simulink%
\footnote{Simulink\TM is a tool for modelling dynamic systems and control
applications, using e.g. networks of numeric filters. The control part
may then be compiled to hardware or software.\\
\url{http://www.mathworks.com/products/simulink/}}
Lustre \cite{LUSTRE} or Scade\TM,%
\footnote{Scade is a commercial product based on LUSTRE.\\
\url{http://www.esterel-technologies.com/products/scade-suite/}}
These languages, in their simplest form, consider programs to be the
software counterpart of a network of electronic circuits (filters,
integrators, rate limiters...) connected by wires; this is actually
how several of these languages represents programs graphically.
Several circuits can be grouped into a compound filter.

One advantage of these high-level languages is that their semantics is
considerably cleaner than those of low-level languages such as C. The
filter and compound filter constructions provide natural
``boundaries'' for blocks of computations that belong together and
probably have some interesting and identifiable properties. It is thus
interesting to be able to analyze these languages in a
\emph{compositional} and \emph{modular} fashion; that is, the analysis
of some block (compound filter) is done independently of that of the
rest of the code, and the result of that analysis may be ``plugged
in'' when analyzing larger programs.

This paper deals with the compositional and modular analysis of
\emph{linear} filters. By this, we mean filters that would be linear
had they been implemented over the real field. Of course, in reality,
these filters are implemented over floating-point numbers and none of
the classical mathematical relationships hold. We nevertheless provide
sound semantics for floating-point computations and sound analysis for
such filters.

\subsection{Digital filtering}
Control/command programs in embedded applications often make use of
linear filters (for instance, low-pass,
high-pass, etc.). The design principles of these filters over the real
numbers are well known; standard basic designs (Butterworth,
Chebyshev, etc.) and standard assembly techniques (parallel, serial)
are taught in the curriculum of signal processing engineers.
Ample literature has been devoted to the design of
digital filters implementing some desirable response, for
implementation in silicon or in software, in fixed-point and in
floating-point.\cite{Jackson_Digital_filters}

However, discrete-time filters are often discussed assuming
computations on real numbers. There is still some considerable
literature on the implications of fixed-point or floating-point
numbers, but the vast majority of the work has focused on ``usual
case'' or ``average case'' bounds ---
it is even argued that worst-case bounds on \emph{ideal} filters on
real numbers are
too pessimistic and not relevant for filter
design \cite[§11.3]{Jackson_Digital_filters}.
The study of the
quantization and roundoff noise generated by fixed-point or
floating-point implementations has mostly been done from a stochastic
point of view, in order to prove average case properties.

For our analysis purposes, we need sound worst-case bounds, and
practical means for obtaining them with reasonable computational
resources. For these reasons, the point of view of the designers of
static analyzers is different from that of the filter designers.

A favorite tool of filter designers is the \emph{Z-transform}
\cite[chapter~3]{Jackson_Digital_filters}, with which the overall ideal
(i.e. implemented over the real numbers) transfer function of a filter
is summarized in a rational function with real coefficients, whose
poles and zeroes determine the frequency response. In this paper, we
shall show how we can use this transform to automatically summarize
networks of linear filters; how this transform allows us to compute
precise bounds on the outcome of the filter, and to statically
summarize complex filters; and how to deal with roundoff errors
arising from floating-point computations.

\subsection{Contributions of the article}
This article gives a sound \emph{abstract semantics} for linear numerical
filters implemented in floating-point or fixed-point arithmetics,
given as the sum of
a linear part (using the Z-transform) and a nonlinear part (given using
affine bounds); this latter part comes from the roundoff noise (and,
possibly, some optional losses of linear precision done for the sake
of the speed of the analysis).
(Sect.~\ref{part:real_compositional} for the ideal, linear part,
\ref{part:float_compositional} for the nonlinear part).

In many occasions, the computed bounds are obtained from the norms
(Sect.~\ref{part:bound_response}) of certain power series.
In Sect.~\ref{part:bounding_norm1}, we give effective methods on the
\emph{real} numbers for bounding such norms.
In Sect.\ref{part:numerical} we explain how to implement some of these
methods efficiently and soundly using integer and floating-point
arithmetics. In Sect.~\ref{part:experiments} we study a few cases.

As with other numerical domains such as those
developed for Astrée, we proceed as
follows: the exact floating-point concrete semantics is
overapproximated
by a mathematically simple semantics on real numbers, which is itself
overapproximated by proved bounds, which are themselves further
overapproximated by an executable semantics (implemented partly in exact
arithmetics, partly using some variant of interval floating-point
computations). This ensures the soundness of the effective
computations.

This paper is an extended version of \cite{Monniaux_CAV05}.

\subsection{Introduction to linear filters and Z-transforms}
\label{part:easy_beginning}
\begin{figure}[tb]
\begin{center}
\input{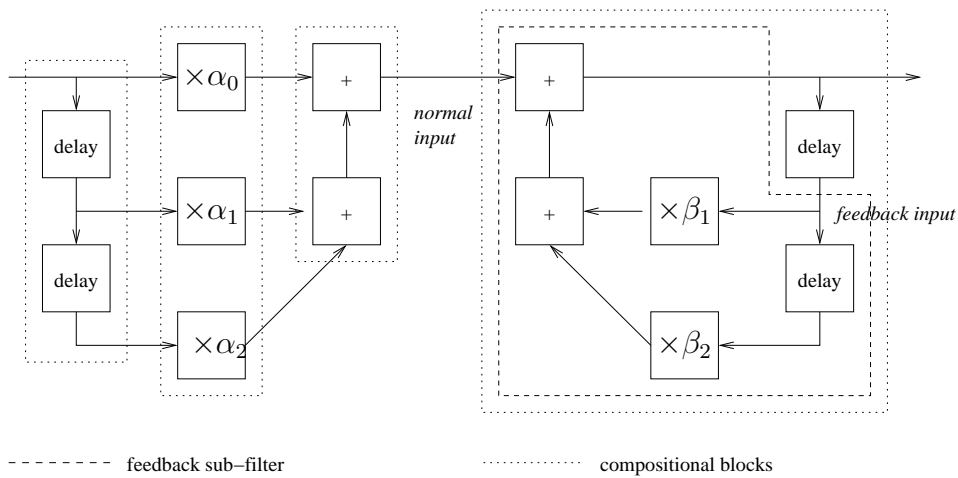}\vspace{-5mm}
\end{center}

\caption{Decomposition of the TF2 filter
  $S_n = \alpha_0 E_n \allowbreak + \allowbreak
         \alpha_1 E_{n-1} \allowbreak + \allowbreak
         \alpha_2 E_{n-2} \allowbreak + \allowbreak
         \beta_1 S_{n-1}  \allowbreak +  \allowbreak
         \beta_2 S_{n-2}$
  into elementary blocks. The compositional blocks are chained by
  serial composition. Inside each compositional on the left,
  elementary gates are composed in parallel. On the right hand side,
  a feedback loop is used.}
\label{fig:composition}
\end{figure}

Let us consider the following piece of C code, which we will use as a
running example (called ``TF2''):
{\small
\begin{alltt}
Y = A0*I + A1*Ibuf[1] + A2*Ibuf[2];
O = Y + B1*Obuf[1] + B2*Obuf[2];
Ibuf[2]=Ibuf[1]; Ibuf[1]=I;
Obuf[2]=Obuf[1]; Obuf[1]=O;
\end{alltt}
}
All variables are assumed to be real numbers
(we shall explain in later sections how
to deal with fixed- and floating-point values with full generality and
soundness). The program takes \texttt{I} as an input and outputs
\texttt{O}; \texttt{A0} etc. are constant coefficients.
This piece of code is wrapped inside a (reactive) loop; the
\emph{time} is the number of iterations of that loop. Equivalently,
this filter can be represented by the block diagram in
Fig.~\ref{fig:composition}.

Let us note $a_0$ etc. the values of the constants and $i_n$
(resp. $y_n$, $o_n$) the value of \texttt{I} (resp. \texttt{Y},
\texttt{O}) at time $n$. 
Then, assuming $o_k=0$ for $k < 0$, we can develop the recurrence:
$o_n
\allowbreak=\allowbreak y_n \allowbreak+\allowbreak b_1.o_{n-1}
                            \allowbreak+\allowbreak b_2.o_{n-2}
\allowbreak=\allowbreak y_n \allowbreak+\allowbreak b_1.
  (y_{n-1} \allowbreak+\allowbreak b_1.o_{n-2}
                            \allowbreak+\allowbreak b_2.o_{n-3})
  \allowbreak+\allowbreak b_2. (y_{n-2} \allowbreak+\allowbreak b_1.o_{n-3}
                            \allowbreak+\allowbreak b_2.o_{n-4})
\allowbreak=\allowbreak y_n \allowbreak+\allowbreak b_1.y_{n-1}
  \allowbreak + \allowbreak (b_2 + b_1^2 b_0) . y_{n-2} + \dots$
where $\dots$ depends solely on $y_k$ with $k < n-2$.
More generally: there exist coefficients $c_0$, $c_1$\dots such that
for all $n$, $o_n = \sum_{k=0} c_k y_{n-k}$. These coefficients solely
depend on the $b_k$; we shall see later some general formulas for
computing them.

But, itself, $y_n = a_0.i_n + a_1.i_{n-1} + a_2.i_{n-2}$. It follows
that there exist coefficients $c'_n$ (depending on the $a_k$ and the
$b_k$) such that  $o_n = \sum_{k=0} c'_k i_{n-k}$. We again find a
similar shape of formula, known as a \emph{convolution product}. The
$c'_k$ sequence is called a \emph{convolution kernel}, mapping $i$ to~$o$.

Let us now suppose that we know a bound $M_I$ on the input:
for all $n$, $|i_n| \leq M_I$; we wish to derive a bound $M_O$ on the
output.
By the triangle inequality, $|O_n| \leq \sum_{k=0} |c'_k| . M_I$. The
quantity $ \sum_{k=0} |c'_k|$ is called the $l1$-norm of the
convolution kernel $c'$.

What our method does is as follows: from the description of a complex
linear filter, it compositionally computes compact, finite representations of
convolution kernels mapping the inputs to the outputs of the
sub-blocks of the filter, and accurately computes the norms of these
kernels (or rather, a close upper bound thereof). As a result, one can
obtain bounds on any variable in the system from a bound on the input.

\section{Linear filters}
In this section, we give a rough outline of what we designate by
linear filters and how their basic properties allow them to be
analyzed.

\subsection{Notion of filters}\label{part:notion_filter}
We deal with numerical filters that take as inputs and output
some (unbounded) discrete streams of floating-point numbers, with
\emph{causality}; that is, the output of the filter at time $t$
depends on the past and present inputs (times 0 to $t$), but not on
the future inputs.%
\footnote{There exist non-causal numerical filtering techniques
One striking example is Matlab's \texttt{filtfilt} function, which
runs the same causal filter in one direction, then in the reverse-time
direction over the same signal; the overall filter has zero phase
shift at all frequencies, a very desirable characteristic in some
applications. Unfortunately, as seen on this example, non-causal
filters require buffering the signal and thus are not usable for
real-time applications. They are outside the scope of this paper.} 
In practice, they are implemented with a  state, and the
output at time $t$ is a function of the input at time $t$ and the
internal state, which is updated.
Such filters are typically implemented as one piece of a synchronous
reactive loop
\cite[§4]{BlanchetCousotEtAl_PLDI03}:
\begin{alltt}
while(true) \{
\ \ ...
\ \ (state, output) = \textbf{filter}(state, input);
\}
\end{alltt}

\subsection{Linear filters}\label{part:linear_filter_formalism}
We are particular interested in filters of
the following form (or compounds thereof):
if $(s_k)$ and $(e_k)$ are respectively the input
and output streams of the filter, there exist real coefficients
$\alpha_0$, $\alpha_1$, \ldots $\alpha_n$ and
$\beta_1$, \ldots $\beta_m$
such that for all time $t$, $s_t$ (the output at time $t$) is defined
as:
\begin{equation}
s_t=\sum_{k=0}^n \alpha_k e_{t-k} + \sum_{k=1}^m \beta_k s_{t-k}
\end{equation}
or, to make apparent the state variables,
\begin{equation}
\begin{bmatrix}
s_{t-m+1}\\
\vdots\\
s_t
\end{bmatrix} =
\begin{bmatrix}
0 & 1 \\
\vdots & \ddots & \ddots \\
0 & \cdots & 0 & 1\\
\beta_m & \cdots & \beta_2 & \beta_1
\end{bmatrix}.
\begin{bmatrix}
s_{t-m}\\
\vdots\\
s_{t-1}
\end{bmatrix} +
\begin{bmatrix}
0 & \cdots & 0\\
\vdots && \vdots\\
0 & \cdots & 0\\
\alpha_n & \cdots & \alpha_0
\end{bmatrix} .
\begin{bmatrix}
e_{t-n}\\
\vdots\\
e_n
\end{bmatrix}
\end{equation}

If the $\beta$ are all null, the filter has necessarily \emph{finite
impulsional response} (FIR) while in the opposite case, it may have
\emph{infinite impulsional response} (IIR). The reason for this
terminology is the study of the reaction of the system to a unit
impulse ($e_0=1$ and $\forall k>0~e_k=0$). In the case of a FIR
filter, $n+1$ time units after the end of the impulse, the output
becomes permanently null. In the case of an IIR filter, the output
(when computed ideally in the real numbers) never becomes permanently
null, but rather follows some exponential decay if the filter is
\emph{stable}. A badly designed IIR filter may be
unstable. Furthermore, it is possible to design filters that should
be stable, assuming the use of real numbers in computation, but that
exhibit gross numerical distortions due to the use of floating-point
numbers in the implementation.

Linear filters are generally noted using their \emph{Z-transform}%
\footnote{An alternate notation \cite{Jackson_Digital_filters}
  replaces all occurrences of $z$ by $z^{-1}$. In such a formalism,
  conditions such as ``the poles must have a module greater
  than~1'' are replaced by the equivalent for the inverse, e.g.
  ``the poles must have a module strictly less than~1''. We chose
  polynomials in $z$ because they allow using normal power series
  instead of Laurent series.}
\begin{equation}
\frac{\alpha_0 + \alpha_1 z + \cdots + \alpha_n z^n}{
            1 - \beta_1 z - \cdots - \beta_m z^m}
\end{equation}
The reasons for this notation  will be made clear in
Sect.~\ref{part:why_z_transform}. In particular, all the ideal compound 
linear filters expressible with elementary elements such as products
by constants, delays, etc. can be summarized by their Z-transform
(Sect.~\ref{part:real_compositional}); that is, they are equivalent to 
a filter whose output is a linear combination of the last $n$ inputs
and $m$ outputs. The Z-transform will also be central in the semantics
of floating-point and fixed-point filters
(Sect.~\ref{part:float_compositional}).

To summarize some salient points of the following sections,
FIR filters given by $\alpha$'s are very easy to
deal with for our purposes, while the stability and decay conditions
of IIR filters are determined by the study of the above rational
function and especially the module of the zeroes of the
$Q(z)= 1 - \beta_1 z - \cdots - \beta_m z^m$ polynomial ($z_0$ is a
\emph{zero} of $Q$ if $Q(z_0)=0$). Those roots are the inverses of the
eigenvalues of the transition matrix.
Specifically, the filter is stable
if all the zeroes have module greater than 1.

\subsection{Bounding the response of the filter}
\label{part:bound_response}
The output streams of a linear filter, as an element of $\bbR^\bbN$,
are linear functions of the inputs and the initial values of the
state variables (internal state variables).

More precisely, we shall see later that, neglecting the floating-point
errors and assuming zero in the initial state variables, the output $S$ is the
\emph{convolution product} $Q \star E$
of the input $E$ by some \emph{convolution
kernel} $Q$: there exists a sequence $(q_n)_{n \in \bbN}$ of reals
such that for any $n$, $s_n = \sum_{k=0}^n q_k e_{n-k}$.
The filter is FIR if this convolution kernel is null except for the
first few values, and IIR otherwise.
If the initial state values $r_1$, \ldots, $r_n$ are nonzero, then
$S = Q_0 \star E + r_1 Q_1 + r_n Q_n$ where the $Q_i$ are convolution
kernels.

Let $E: (e_k)_{n \in \bbN}$ be a sequence of real or complex
numbers. We call \emph{L$\infty$-norm} of $E$, if finite, and note
$\| E \|_\infty$ the quantity $\sup_{k \in \bbN} | e_k |$.
Because of the isomorphism between sequences and formal power series,
we shall likewise note $\| \sum_k a_k z^k \|_\infty = \sup_k |a_k|$.
We are interested in bounding the response of the filter with respect
to the infinite norm: i.e. we want to construct a function $f$ such
that $\| S \|_\infty \leq f(\| E \|_\infty)$.
Said otherwise, if for all the past of the computation since the
last reset of the filter, $|e|$ was less than $M$, then has $|s|$ has
been always less than $f(|M|)$ since the last reset.

If we do not have initialization conditions nor floating-point
errors, $f$ will be linear, otherwise it will be affine.
Let us place ourselves for now in the former case: we are trying to
find a number $g$ such that $\| S \|_\infty \leq g.\| E \|\infty$.
For any linear function $f$ mapping sequences to sequences, we call
\emph{subordinate infinite norm} of $f$, noted, $\| f \|_\infty$
the quantity $\sup_{\|x\|_\infty=1} \| f(x) \|_\infty$, assuming is is
finite. We are thus interested in
$g=\| E \mapsto Q \star E \|_\infty$. If this quantity is finite,
the filter is stable; if it is not, it is
unstable: it is possible to feed an input sequence
to the filter, finitely bounded, which we result in arbitrarily high
outputs at some point in time.

For a sequence (or formal series) $A$, we note
$\| A \|_1 = \sum_{k=0}^\infty |a_k|$, called its \emph{L1-norm}, if
finite. Then we have the following crucial and well-known result
\cite[§11.3]{Jackson_Digital_filters}: % TODO: verifier ref

\begin{lem}
$\| E \mapsto Q \star E \|_\infty = \| Q \|_1$.
\end{lem}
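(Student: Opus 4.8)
The plan is to prove the two inequalities $\|E \mapsto Q \star E\|_\infty \le \|Q\|_1$ and $\|E \mapsto Q \star E\|_\infty \ge \|Q\|_1$ separately; the first is a direct consequence of the triangle inequality, while the second requires exhibiting near-optimal inputs.

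For the upper bound, I would fix any input $E$ with $\|E\|_\infty \le 1$ and bound the output pointwise. Writing $s_n = \sum_{k=0}^n q_k e_{n-k}$ and applying the triangle inequality gives $|s_n| \le \sum_{k=0}^n |q_k|\,|e_{n-k}| \le \sum_{k=0}^n |q_k| \le \|Q\|_1$, uniformly in $n$. Taking the supremum over $n$ yields $\|Q \star E\|_\infty \le \|Q\|_1$, and then the supremum over admissible $E$ gives $\|E \mapsto Q \star E\|_\infty \le \|Q\|_1$.

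For the matching lower bound, the idea is to choose the input so that, at one chosen output time, every term of the convolution sum points in the same direction and the moduli add up. Fix $N \in \bbN$ and define a finitely supported input $E^{(N)}$ by $e_{N-k} = \overline{q_k}/|q_k|$ whenever $0 \le k \le N$ and $q_k \ne 0$, and $e_j = 0$ otherwise. Each coefficient has modulus at most $1$, so $\|E^{(N)}\|_\infty \le 1$; and the output at time $N$ is $s_N = \sum_{k=0}^N q_k\,\overline{q_k}/|q_k| = \sum_{k=0}^N |q_k|$. Hence $\|E \mapsto Q \star E\|_\infty \ge |s_N| = \sum_{k=0}^N |q_k|$, and letting $N \to \infty$ gives $\|E \mapsto Q \star E\|_\infty \ge \|Q\|_1$. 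Combining the two inequalities proves the claim.

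The step I expect to be most delicate is the lower bound, for two reasons. First, the near-optimal input $E^{(N)}$ depends on the output time $N$ at which the kernel is ``read off'', so no single fixed input realizes the bound at once for an IIR kernel; this is harmless because the subordinate norm is a supremum, which absorbs the whole family $\{E^{(N)}\}_N$, but it should be stated carefully. Second, the sign-alignment trick must use the complex phase $\overline{q_k}/|q_k|$ rather than a real sign, since the $q_k$ may be complex; one also has to normalize so that $\|E^{(N)}\|_\infty = 1$ exactly, by setting the unconstrained entries to any unit value, which is possible as soon as $Q \ne 0$ (the case $Q = 0$ being trivial). Finally, the same construction shows that when $\|Q\|_1 = +\infty$ the subordinate norm is infinite as well, consistent with the proviso that the norm be finite and with the characterization of stability given above.
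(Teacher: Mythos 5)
Your proof is correct and takes essentially the same route as the paper's: the triangle inequality gives the upper bound, and the lower bound comes from sign-aligned test inputs read off at a single output time $N$, with $N$ taken to infinity. The only differences are minor refinements --- you use complex phases $\overline{q_k}/|q_k|$ where the paper uses real signs $\pm 1$ (the paper's kernels are real), and you treat the normalization of $E^{(N)}$ and the case $\|Q\|_1 = \infty$ more explicitly.
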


\begin{proof}
We shall first prove that
$\| E \mapsto Q \star E \|_\infty \leq \| Q \|_1$;
that is, for any sequences $Q$ and $E$,
$\| Q \star E \|_\infty \leq \| Q \|_1 . \| E \|_\infty$.
Let us note $C=Q \star E$. $c_n = \sum_{k=0}^n q_k e_{n-k}$,
therefore $|c_n|
\allowbreak\leq\allowbreak \sum_{k=0}^n |q_k| |e_{n-k}|
\allowbreak\leq\allowbreak \|e\|_\infty. \sum_{k=0}^n |q_k|
\allowbreak\leq\allowbreak \|e\|_\infty . \|Q\|_1$.

We shall then show equality.
Let $M < \| O \|_1$. Recall that
$\| Q \|_\infty = \sum_{k=0}^\infty |q_k|$. Then there exists $N$ such
that $\sum_{k=0}^N |q_k| \geq M$.
Choose $e_k=1$ if $k \leq N$ and $q_{n-k} \geq 0$, $e_k=-1$ otherwise.
Clearly, $\| E \|_\infty=1$, and
$c_n = \sum_{k=0}^n e_k q_{n-k} = \sum_{k=0}^n |q_{n-k}| \geq M$,
therefore $\| Q \star E \|_\infty \geq M$ and
$\| E \mapsto Q \star E\|_\infty \geq M$.
Since this is valid for any~$M < \| Q \|_1$, then the
$\| E \mapsto Q \star E\|_\infty = \| Q \|_1$ equality holds.
\end{proof}

Note that most of the discussion on numerical filters found in the
signal processing literated is based on the L2-norm
$\| x \|_2 = \left(\sum_{k=0}^\infty |x_k|^2\right)^{1/2}$ (which is
adapted to energy considerations) --- for instance, for estimating
the frequency spectrum of the rounding noise. We shall
never use this norm in this article.

\section{Convolution kernels as formal power series}
In the preceding section, we said that the output of the ideal filter is
just the convolution of the input with some (possibly infinite)
kernel. In this section, we show how \emph{formal power series} are a
good framework for describing this convolution, and basic facts
about the kernels of
the filters we are interested, given as \emph{rational functions}.

\subsection{Formal power series}
\label{part:formal_power_series}
We shall first recall a few definitions and facts about formal power
series.
The algebra formal power series  $K[[X]]$ over a field $K=\bbR$ or
$\bbC$ is the vector space of countably infinite sequences $K^\bbN$ where the
product of two sequences $A: (a_k)_{k\in\bbN}$ and  $B: (b_k)_{k\in\bbN}$ is
defined as $A.B: (c_k)_{k\in\bbN}$ by, for all $n \in \bbN$,
$c_n=\sum_{k=0}^n a_k b_{n-k}$ (convolution).
Remark that for any algebra operation (addition, subtraction,
multiplication) and any $N$, we obtain the same results for the coefficients
$c_n$ for $n \leq N$ as if $A$ and $B$ were the coefficients of
polynomials and we were computing the coefficient $c_n$, the $n$-th
degree coefficient of the polynomial $A.B$.%
\footnote{One can therefore see $K[[X]]$ as the projective limit of the
$K[X]/X^n$ quotient rings with the canonical
$K[X]/X^{n+1} \rightarrow K[X]/X^n$ morphisms in the category of rings.}
For this reason, we shall from now on note
$A(z)=\sum_{k=0}^\infty a_k z^k$
by analogy with the polynomials. Note that for most of this article, we are
interested in \emph{formal} power series and not with their possible
interpretation as holomorphic functions (i.e. it is not a problem at
all if the convergence radius of the $\sum_{k=0}^\infty a_k z^k$
series is null); we shall note the rare occasions when we need
convergence properties (and we shall prove the needed convergences).
If all the $a_k$ are null except for a finite number,
the formal series $A$ is a polynomial.

Wherever we have a convolution $(a_k) \star (b_k)$ of sequences, we can
equivalently consider a product $A.B$ of formal series.

We shall often wish to take the \emph{inverse} of a power series, and
the quotient $A/B$ of two series. This
is possible for any series $\sum_k b_k b^k$ such that $b_0$ is not
null.
We define a sequence of series $A^{(n)}$ as follows: $A^{(0)}=A$,
$A^{(n+1)}=A^{(n)}-q_n*z^k B$ where $q_n=a^{(n)}_n/b_0$. Note that
for all $n \in \bbN$,
$k < n~ A^{(n)}_k=0$ and
$A=A^{(n+1)}+\sum_{k=0}^n q_k z^k B$; thus for all $n$,
$A \equiv \sum_{k=0}^n q_k z^k B \pmod{X^n}$, which may equivalently
written as   $A \equiv Q.B \pmod{X^n}$. Therefore, $A = Q.B$, which
explains why $Q$ can be called the \emph{quotient} of $A$ by $B$.

A very important case for the rest of the paper
is $1/(1-z)=\sum_{k=0}^\infty z^k$.
Another important constatation is
that this quotient formula applied to
\begin{equation}
S = E.\frac{\alpha_0 + \alpha_1 z + \cdots + \alpha_n z^n}{
            1 - \beta_1 z - \cdots - \beta_m z^m}
\end{equation}
where $S$ and $E$ are expressed as formal power series
is equivalent to running the IIR filter defined by the above rational
function with $E$ the inputs and $S$ the output.

\subsection{Stability condition}
We manipulate convolution kernels expressed as
rational functions where the coefficient of degree $0$ of the
denominator is $1$. We shall identify a rational function with the
associated formal power series.
Using complex analysis, we shall now prove the following lemma, giving
the stability condition familiar to filter designers:

\begin{lem}
$\| Q \|_1 < \infty$ if and only if all the poles of $Q$ are
outside of the $|z| \leq 1$ unit disc.
\end{lem}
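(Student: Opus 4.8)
The plan is to prove both implications through the standard complex-analytic dictionary between the coefficients of a power series and the singularities of the function it represents, using the fact (established in the previous subsection) that the formal series associated to $Q$ is exactly the Taylor expansion at $0$ of the rational function $Q = P/D$, so that the two agree as holomorphic functions inside the radius of convergence. Throughout I take $P$ and $D = 1 - \beta_1 z - \cdots - \beta_m z^m$ coprime, so that the poles of $Q$ are exactly the zeros $z_1, \dots$ of $D$, and I invoke the classical fact that the radius of convergence $R$ of the Taylor series of a rational function at $0$ equals the distance $\min_j |z_j|$ from the origin to the nearest pole.

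For the easy direction (all poles outside the closed disc $\Rightarrow$ $\ell^1$), suppose every pole satisfies $|z_j| > 1$. Then $R = \min_j |z_j| > 1$ (with $R = \infty$ in the FIR/polynomial case), so I can choose a radius $\rho$ with $1 < \rho < R$. Since $\rho < R$, the series $\sum_k |q_k| \rho^k$ converges; and since $\rho \geq 1$ we obtain $\sum_k |q_k| \leq \sum_k |q_k| \rho^k < \infty$, whence $\| Q \|_1 < \infty$.

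For the converse I argue by contraposition: assuming $Q$ has a pole $z_0$ with $|z_0| \leq 1$, I show $\| Q \|_1 = \infty$, splitting into two cases. If $|z_0| < 1$, then $R \leq |z_0| < 1$, so $\limsup_k |q_k|^{1/k} = 1/R > 1$ and the coefficients do not even tend to $0$; hence $\sum_k |q_k| = \infty$ trivially. The delicate case is a pole exactly on the unit circle, $|z_0| = 1$, with no pole strictly inside, so that $R = 1$. Here I would use the Weierstrass $M$-test: assuming for contradiction that $\sum_k |q_k| < \infty$, the bound $|q_k z^k| \leq |q_k|$ valid for all $|z| \leq 1$ shows that the series converges uniformly on the closed unit disc, so its sum $f(z) = \sum_k q_k z^k$ is continuous there and in particular bounded by $\| Q \|_1$ on the closed unit disc. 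But on the open disc $f$ coincides with the rational function $Q$, and since $z_0$ is a genuine pole we have $|Q(z)| \to \infty$ as $z \to z_0$ from inside the disc, contradicting boundedness. Hence $\sum_k |q_k| = \infty$.

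The main obstacle is precisely this boundary case $|z_0| = 1$: a radius of convergence equal to $1$ does not by itself rule out absolute summability (for instance $\sum_k z^k/k^2$ has $R = 1$ yet lies in $\ell^1$), so the argument must exploit the rational structure — namely that a pole forces genuine blow-up — rather than mere growth-rate estimates on $|q_k|$. The cleanest way to capture this is the boundedness from the $M$-test set against the unboundedness at the pole, as above. An alternative is an explicit partial-fraction computation showing that a term $c/(1 - z/z_0)^l$ with $|z_0| = 1$ contributes coefficients of magnitude $\asymp |c|\,\binom{n+l-1}{l-1}$, which are not summable; but one then has to argue that such contributions cannot cancel across several boundary poles, a complication the analytic argument sidesteps entirely.
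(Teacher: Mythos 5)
Your proof is correct and follows essentially the same route as the paper's: the forward direction via the radius of convergence exceeding $1$ (so the series converges absolutely at $|z|=1$), and the converse via absolute convergence on the closed unit disc excluding any pole there. The only difference is one of detail: you spell out the delicate boundary case $|z_0|=1$ (uniform convergence by the $M$-test gives boundedness on the closed disc, contradicting blow-up at the pole), which the paper compresses into the single clause that absolute convergence within the closed disc forbids any pole in it.
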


That is: a filter is stable in ideal real arithmetics if and only if
all its poles have module greater than 1.

\begin{proof}
Consider the poles of the rational function $Q$. If none are in
the $|z| \leq 1$ unit disc, then the radius of convergence of the
power series of the meromorphic function $Q$ around 0 has a radius of
convergence strictly greater than 1. This
implies that the series converges absolutely for $z=1$ and thus that
$\| Q \|_1$ is finite. % TODO: verifier terminologie
On the other hand, if $\| Q \|_1 < \infty$ then the series converges
absolutely within the $|z| \leq 1$ unit disc and no pole can be within
that disc.
\end{proof}

\section{Compositional semantics: real field}
\label{part:real_compositional}
Now, we have a second look at the basic semantics of linear filters,
in order to give a precise and
 compositional \emph{exact} semantics of
compound filters on the real numbers. We show
that any linear filter with one input and one output
is equivalent (on the real numbers) to a filter
as defined in §\ref{part:linear_filter_formalism}.

\subsection{Definition}
A filter or filter element has
\begin{itemize}
\item $n_i$ inputs $I_1$, \dots, $I_{n_i}$ (collectively, vector $I$),
  each of which is a \emph{stream} of real numbers;
\item $n_r$ reset state values $r_1$, \dots, $r_{n_r}$ (collectively,
  vector $R$), which are the initial values of the state of the
  internal state variables of the filter (inside delay operators)
  at the beginning of the computation;
\item $n_o$ output streams $O_1$, \dots, $O_{n_o}$ (collectively,
  vector $O$).
\end{itemize}

If $M$ is a matrix (resp. vector) of rational functions, or series,
let $N_x(M)$ be the coordinate-wise application of the norm $\| \cdot \|_x$ to
each rational function, or series, thereby providing a vector
(resp. matrix) of nonnegative reals. We note $m_{i,j}$ the element in
$M$ at line $i$ and column $j$.

We note by $\bbR(z)$ the field of rational functions over $\bbR$ and
by $\ratringr$ the ring of rational functions of the form
$P(z)/(1-zQ(z))$ where $P$ and $Q$ are polynomials (that is, the ring
of rational functions such that the constant term of the denominator
is not null).%
\footnote{This last ring is the localization of the ring $\bbR[z]$ of
  real polynomials at the prime ideal $(z)$ generated by $z$, thus the
  notation.}
When $F \in \ratringr$, we note $\| F \|_1$ the
L1-norm of the associated power series.

When computed upon the real field, a filter $F$ is characterized by:
\begin{itemize}
\item a matrix $T^F \in \matrices_{n_o,n_i}(\ratringr)$ such that
  $t_{i,j}$ characterizes the linear response of output stream $i$
  with respect to input stream~$j$;

\item a matrix $D^F \in \matrices_{n_o,n_r}(\ratringr)$ such that
  $d_{i,j}$ characterizes the (decaying) linear response of output stream $i$
  with respect to reset value~$j$.
\end{itemize}
We note $F(I,R)$ the vector of output streams of filter $F$ over the
reals, on the vector of input streams $I$ and the vector of reset values~$R$.
Then we have
\begin{equation}
\forall I \in (\bbR^\bbN)^{n_i}~
\forall R \in \bbR^{n_r}~
 F(I,R) = T^F.I + D^F.R
\end{equation}

When the number of inputs and outputs is one, and initial values are
assumed to be zero, the characterization of the filter is much simpler
--- all matrices and vectors are scalars (reals, formal power series
or rational functions), and
$T^D$ is null. We recommend that the reader
instantiates our framework on this case for better initial
understanding.

\begin{figure}
\begin{center}
\input{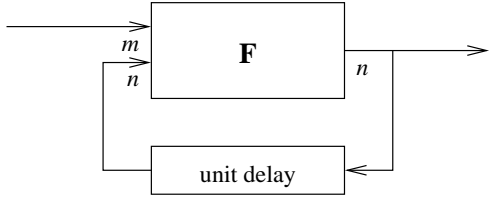}
\end{center}
\caption{A feedback filter}
\label{fig:feedback}
\end{figure}

\subsection{Basic arithmetic blocks}
\begin{description}
\item[Plus] node implemented in floating point type $f$:
 $n_i=n_o=1$,\\
  $T=\begin{bmatrix}1 & 1\end{bmatrix}$, $D=[]$;
\item[Scale by $k$] node implemented in floating point
  type $f$: 
  $T=\begin{bmatrix}k\end{bmatrix}$, $D=[]$;
\item[Delay without initializer] (delay for $n$ clock ticks):
  $T=\begin{bmatrix}z^n\end{bmatrix}$,
  $D=0$;
\item[Unit delay with initializer]:
  $T=\begin{bmatrix}z\end{bmatrix}$,
  $D=\begin{bmatrix}1\end{bmatrix}$;
\end{description}

\subsection{Composition}
\begin{description}
\item[Parallel composition]
  $T=\begin{bmatrix}T_1 & 0 \\ 0 & T_2\end{bmatrix}$,
  $D=\begin{bmatrix}D_1 & 0 \\ 0 & D_2\end{bmatrix}$;

\item[Serial composition] through filter 1, then 2:\\
  $T=T_2.T_1$,
  $D=\begin{bmatrix} T_2.D_1 & D_2\end{bmatrix}$.
\end{description}

\subsection{Feedback loops}
Let us consider a filter consisting of a filter $F$ with $m+n$ inputs
and $n$ outputs and feedback loops running the $n$ outputs to the last
$n$ inputs through unit delays. (Fig.~\ref{fig:feedback})
We split $T^F$ into sub-matrices
$T_I \in \matrices_{n,m}(\ratringr)$ and
$T_O \in \matrices_{n,n}(\ratringr)$ representing respectively the
responses to the global inputs and to the feedback loop.
The system then verifies the linear equation over the vectors of
formal power series:
$O = T^F_I.I + z T^F_O.P + D.R$, and thus
$(\Id{n}-z T^F_O) O = T^F_I.I + D^F.R$.

By Cor.~\ref{cor:feedback_matrix_inversion},
$\Id{n}-z T^F_O$ is invertible in $\matrices_{n,n}(\ratringr)$,%
\footnote{
This result is not surprising, because the system, by construction,
must admit causal solutions.}
thus $T = (\Id{n}-z T^F_O)^{-1}.T^F_I$
and $D = (\Id{n}-z T^F_O)^{-1}.D^F$.
Section~\ref{part:approximate_algebraic_structures} explains how to
perform such computations in practice.

\subsection{Examples}
\label{part:why_z_transform}
\label{part:reconstitution}

\begin{figure}
\begin{center}
\input{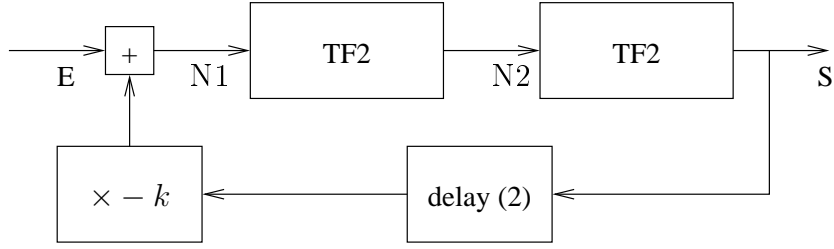}
\end{center}
\caption{A compound filter consisting of two second order filters and
  a feedback loop. Each TF2 node is a second-order filter whose
  transfer function is of the form
$(\alpha_0+\alpha_1 z+\alpha_2 z^2)(1-\beta_1 z-\beta_2 z^2)^{-1}$.}
\label{fig:compound_filter_example1}
\end{figure}

A second order IIR linear filter is expressed by
$S = \alpha_0.E + \alpha_1.\textrm{delay}_2(E) +
\alpha_2.\textrm{delay}_2(E) + \beta_1.\textrm{delay}_1(S)
+ \beta_2.\textrm{delay}_2(S)$. This yields an equation
$S = (\alpha_0+\alpha_1 z+\alpha_2 z^2)E + (\beta_1 z+\beta_2 z^2)S$.
This equation is easily solved into
$S=(\alpha_0+\alpha_1 z+\alpha_2
  z^2)(1-\beta_1 z-\beta_2 z^2)^{-1}.E$.

In Fig.~\ref{fig:compound_filter_example1}, we first analyze the two
internal second order IIR filters separately and obtain
\begin{eqnarray}
Q_1=\frac{\alpha_0+\alpha_1 z+\alpha_2 z^2}{1-\beta_1 z-\beta_2 z^2}\\
Q_2 = \frac{a_0+a_1 z+a_2 z^2}{1-b_1 z-b_2 z^2}\\
\end{eqnarray}

The we analyze the feedback loop and obtain for the whole filter a
rational function with a 6th degree dominator:
\begin{equation}
S = \frac{Q_1.Q_2}{1+kz^2.Q_1.Q_2}.E
\end{equation}
where $Q_1$ and $Q_2$ are the transfer function of the TF2 filters
(form $(\alpha_0+\alpha_1 z+\alpha_2 z^2)(1-\beta_1 z-\beta_2
z^2)^{-1}$), which we computed earlier.

\section{Bounding the 1-norm of series expansions of rational
  functions}
\label{part:bounding_norm1}

\subsection{Inverses of products of affine forms}
\label{part:inverse_product_norm}
Let $\xi_i$ be complex numbers of module strictly greater than 1.
Let $Q(z)$ be the formal power series $\prod_{i=1}^n Q_i$
where the $Q_i(z)$ are the power series $(z-\xi_i)^{-1}$.
The $n$-th degree coefficient of $q_i$ is $-\xi_i^{(n+1)}$, by the
easy expansion:
\begin{equation}
\frac{1}{z-\xi} = \frac{-1/\xi}{1-z/\xi}
\end{equation}
$q^{(n)}$, the coefficient of $z^n$ in the $Q$ power series, is
obtained by successive convolution products; it is
\begin{equation}
q^{(n)}=\sum_{\forall i, k_i \in \bbN \wedge \sum_i k_i=n}
  \prod q_i^{(k_i)}
\end{equation}
We can therefore bound its module:
\begin{equation}
\left|q^{(n)}\right| \leq \sum_{\forall i, k_i \in \bbN \wedge \sum_i k_i=n}
  \prod \left|q_i^{(k_i)}\right|
\end{equation}

The right hand side of the preceding inequality is just the
coefficient $\dot{q}^{(n)}$ of the series
$\prod_{i=1}^n \dot{Q}_i$ where
$\dot{q}_i^{(n)}=\left|q_i^{(n)}\right|=|\xi_i|^{(n+1)}$ is the $n$-th order
coefficient of the $\frac{1}{|\xi_i|-z}$ series.
Since $|\xi_1| > 1$,
the convergence radius of this last series
is strictly greater than 1; furthermore, all its coefficients are
nonnegative; therefore, the sum
of its  coefficients is the value of the function at $z=1$,
that is, $\frac{1}{|\xi_i|-1}$.
We can therefore give an upper bound:
\begin{equation}
\left\| \frac{1}{(z-\xi_1)\cdots(z-\xi_n)} \right\|_1
\leq \frac{1}{(|\xi_1|-1)\cdots(|\xi_n|-1)}
\end{equation}
%% and, more precisely, for all $N \in \bbN$:
%% \begin{equation}
%% \left\| \frac{1}{(z-\xi_1)\cdots(z-\xi_n)} \right\|_1 ^{\geq N}
%% \leq \frac{1}{\prod_{i=1}^n (|\xi_i|-1)} -
%%   \left\| \frac{1}{\prod_{i=1}^n (z-|\xi_i|)} \right\|_1 ^{< N}
%% \end{equation}
%% In section~\ref{part:compute_initial_norm1} we shall discuss the
%% numerical aspects of this computation.

\subsection{Rough and less rough approximation in the general case}
\label{part:norm1_rough_approximation}
Let $P(z)/Q(z)$ be a rational function, with
$P(z)$ a polynomial of degree $m$
$Q(z)$ a monic polynomial of degree $n$.
Let $\textrm{zeroes}(Q)$ be the multiset of zeroes of
$Q$ (multiple zeroes are counted with their multiplicity).
$P(z)=\sum_k p_k z^k Q(z)$, thus $\|P\|_1 \leq \sum_k |p_k|.\|Q\|_1$.
Therefore
\begin{equation}
\left\| \frac{P}{Q} \right\|_1 \leq
  \frac{\| P \|_1}{\prod_{\xi \in \textrm{zeroes}(Q)}(|\xi|-1)}
\end{equation}

This is, however, a very coarse approximation. Intuitively, the mass
of the convolution kernel expressed by the $P/Q$ series lies in its
initial terms. Still, with the above formula, we totally neglect the
cancellations that happen in the computation of this initial part of
the kernel; i.e. instead of considering $|a-b|$, we bound it by
$|a|+|b|$. The solution is to split $\| P/Q \|_1$ into
$\| P/Q \|_1^{<N}$ and $\| P/Q \|_1^{\geq N}$.
We shall elaborate on this in Sect.~\ref{part:development}.

\subsection{Second degree denominators with complex poles}
A common case for filtering applications is when the denominator is a
second degree polynomial $Q$ of negative discriminant. In this case,
the roots of $Q$ are two conjugate complex numbers $\xi$ and
$\bar{\xi}$ and the decomposition is as follows:
\begin{equation}
\frac{P(z)}{Q(z)} = P_0(z) +
  \frac{\lambda}{z-\xi} + \frac{\bar{\lambda}}{z-\bar{\xi}}
\end{equation}
where $\lambda=P(\xi)/(\xi-\bar{\xi})$. We shall for now leave $P_0$ out.

We are interested in the coefficients $a_k$ of this series:
\begin{equation}
a_k = -\left(\frac{\lambda}{\xi^{k+1}}+
             \frac{\bar{\lambda}}{\bar{\xi}^{k+1}}\right)
\end{equation}
Let us write $\lambda=|\lambda| \e^{i\alpha}$ and
$\xi=|\xi| \e^{i\beta}$; then
\begin{multline}
a_k = - \frac{|\lambda|}{|\xi|^{k+1}}
   \left(\e^{i\alpha}.\e^{-i(k+1)\beta}
   + \e^{-i\alpha}.\e^{i(k+1)\beta}\right)\\
   = - 2 \frac{|\lambda|}{|\xi|^{k+1}} \cos\left(\alpha-(k+1)\beta\right) 
\end{multline}

To summarize, the sequence is a decreasing exponential of rate $1/|\xi|$
modulated by a sine wave and multiplied by a constant factor
$|\lambda|/|\xi|$. Therefore, computing $|\lambda|$ and $|\xi|$ will
be of prime importance.
If $Q$ is monic $Q(z)=z^2 + z_1 x + z_0$, then
$|\xi|^2 = \xi\bar{\xi} = c_0$. In the case of a rational function
of the form
\begin{equation}
\frac{P(z)}{Q(z)} =
  \frac{\alpha_0+\alpha_1 z+\alpha_2 z^2}{1-\beta_1 z-\beta_2 z^2}
\end{equation}
then $|\xi| = |\beta_2|^{-1/2}$ and $\lambda=P(\xi)/(\xi-\bar{\xi})$.
Should we prefer not to compute with complex numbers,
\begin{equation}
|\lambda|^2=\lambda\bar{\lambda} =
  \frac{P(\xi)(\bar{\xi}-\xi)+P(\bar{\xi})(\xi-\bar{\xi})}{
       (\xi-\bar{\xi})^2}
\end{equation}
The numerator is a symmetric polynomial in $\xi$ and $\bar{\xi}$,
roots of $Q$, and therefore can be expressed as a polynomial in the
coefficients of $Q$~; its coefficients are polynomials in the
coefficients of $P$, therefore the whole polynomial can be expressed
as a polynomial in the coefficients of $P$ and $Q$. The denominator is
just the discriminant of $Q$.
\begin{equation}\small
|\lambda|^2 = 
\frac{{\alpha_2}^2 + 
      \beta_2\,\left( -{\alpha_1}^2 - 
         \alpha_0\,\alpha_1\,\beta_1 + 
         {\alpha_0}^2\,\beta_2 \right)  + 
      \alpha_2\,\left( \alpha_1\,
          \beta_1 + 
         \alpha_0\,
          \left( {\beta_1}^2 + 2\,\beta_2
            \right)  \right) }{-({\beta_1}^2 + 
      4\,\beta_2)}
\end{equation}

We are now interested in bounding $|a_k|$. If we just use
$| \cos\left(\alpha \allowbreak - \allowbreak
(k+1)\beta\right) | \allowbreak \leq \allowbreak
1$,
we come back to the
earlier bounds obtained by totally separating the series arising from
the two poles.

We shall now obtain a better bound using the following constatation:
for any real $\theta$,
\begin{equation}
|\cos \theta| = \sqrt{\cos^2 \theta}
              = \sqrt{\left(1+\cos(2\theta)\right)/2}
\leq 2^{-1/2} (1+\cos(2\theta)/2)
\end{equation}
using the concavity inequality $\sqrt{1+x} \leq 1+x/2$.
Therefore
\begin{equation}
|a_k| \leq \sqrt{2} \frac{|\lambda|}{|\xi|^{k+1}}
  (1+\cos(2(\alpha-(k+1)\beta))/2)
\end{equation}
Now, we are interested in bounding $\sum_{k=N}^\infty [a_k|$.
For any $a$ and $b$, and $0 \leq r < 1$
\begin{equation}
\sum_{k=0}^\infty \cos(a+kb) r^k =
  \frac{\cos a - r \cos(a-b)}{1-2r\cos b + r^2}
\end{equation}

Let us now see the quality of such bounds $S_1 \leq S_2$, $S_1 \leq S_3$:
\begin{eqnarray}
S_1 = \sum_{k=0}^\infty |\cos(a+kb)| r^k\\
S_2 = \sum_{k=0}^\infty r^k = \frac{1}{1-r}\\
S_3 = \frac{1}{\sqrt{2}} \sum_{k=0}^\infty (1+2\cos(2(a+kb)))r^k\\
S_3 = \frac{1}{\sqrt{2}} \left(\frac{1}{1-r}
  + \frac{1}{2}. \frac{\cos(2a) - r \cos(2(a-b))}{1-2r\cos b + r^2}\right)
\end{eqnarray}
Note that $S_3$ is not necessarily better than $S_2$ (for $a=0$ and $b=0$,
$S_3/S_2 = 3/(2\sqrt{2}) \simeq 1.06$). However, some moderate gains
may be obtained\,; for instance, for $r=0.7$, $a=0$ and $b=0.3$,
$S_1 \simeq 2.60$, $S_2 \simeq 3.33$ and $S_3 \simeq 2.80$.
For practical purposes, the bound obtained using $S_2$ is very
sufficient and easy to compute. We thus opt for this one.

\subsection{Finer bounds using partial fraction decomposition}
\label{part:partial_fraction_decomposition}
It is well known that if $Q_i$ are pairwise prime polynomials, and $Q$
is their product, then for any polynomial $P$ prime with $Q$ the
fraction $P/Q$ admits a \emph{partial decomposition} as
$P/Q=P_0 + \sum_i P_i/Q_i$, where $P_0$ is the Euclidean quotient of
$P$ by $Q$ and the degree of $P_i$ is strictly less than that of
$Q_i$.

Using the fundamental theorem of algebra, it follows that if the
$\xi_i$ are the distinct roots of $Q$ and $m_i$ their multiplicity,
then there exist $\lambda_{i,j} \in \bbC$ such that
\begin{equation}
P/Q = P_0 + \sum_i \sum_{j=1}^{m_i} \frac{\lambda_{i,j}}{(z-\xi_i)^j}
\end{equation}
Since $Q$ is a real polynomial, its roots are either real, either
pairs of $\xi_i$ and conjugate $\xi_{i'}=\bar{\xi}$, with the same
multiplicity, and also for all $j$,
$\lambda_{i',j}=\bar{\lambda}{i',j}$.

However, while theoretically sound, this result is numerically
delicate when there are multiple roots, or different roots very close
to each other.\cite[§1.3]{MR1646107}
For instance, let us consider a first-degree polynomial
$P$ and a second-degree polynomial $Q$, then
\begin{equation}
\frac{P(z)}{Q(z)}=\frac{\lambda_1}{z-\xi_1}+\frac{\lambda_2}{z-\xi_2}
\end{equation}
and we obtain $\lambda_1=P(\xi_1)/(\xi_2-\xi_1)$ (and
$\lambda_2=P(\xi_2)/(\xi_1-\xi_2)$). Both numbers will get very large,
in inverse proportion of $\xi_1-\xi_2$. While it is quite improbable
that we should analyze filters where two separate poles have been
intentionally be placed very close together, it is possible that we
analyze filters with multiple poles (for instance, the composition of
a filter with itself), and, with numerical computations, we would have
two extremely close poles and thus a dramatic numerical instability.

We still can proceed with a \emph{radius $r$ decomposition} of $P/Q$
\cite[Def~1.3]{MR1646107}: instead of factoring $Q$ into a product of
$z-\xi_i$ factors, we factor it into a product of $Q_i$ such that for
any $i$, and any roots $\xi_1$ and $\xi_2$ of $Q_i$, then
$|\xi_1 - \xi_2| < 2r$. The same reference describes algorithms for
performing such decompositions. We obtain a decomposition of the form
\begin{equation}
\frac{P}{Q} = P_0 + \sum_i P_i/Q_i
\end{equation}
where the roots of each $Q_i$ are close together, the degree of $P_i$
is less than the degree of $P_i$. From this we obtain the bound
\begin{equation}
\left\| \frac{P}{Q} \right\|_1 \leq \|P_0\|_1 +
  \sum_i \| P_i \|. \left\| \frac{1}{Q_i} \right\|_1
\end{equation}
which we can bound using the inequalities given in the preceding
subsections. We can, as before, improve on this bound by splitting the
series between an initial sequence and a tail.
%% \begin{equation}
%% \left\| \frac{P}{Q} \right\|_1 \leq
%%   \left\| \frac{P}{Q} \right\|_1^{<N} +
%%   \sum_i \|P_i\|_1. \left\|\frac{1}{Q_i}\right\|_1^{<N-\textrm{degree}(P_i)}
%% \end{equation}

\subsection{Development of rational functions and normed bounds}
\label{part:development}
Let $P(z)/Q(z) \in \ratringr$ be a rational function representing a
power series by its development $(u_n)_{n\in\bbN}$ around 0. We wish
to bound $\| u \|_1$, which we shall note $\| P/Q \|_1$.
As we said before, most of the mass of the development of $P/Q$ lies
in its initial terms, whereas the ``tail'' of the series is negligible
(but must be accounted for for reasons of soundness). We thus split
$P/Q$ into an initial development of $N$ terms and a tail, and use
\begin{equation}
\| P/Q \|_1 = \| P/Q \|_1^{<N} + \| P/Q \|_1^{\geq N}
\end{equation}
$\| P/Q \|_1$ is computed by computing explicitly the $N$ first terms
of the development of $P/Q$. We shall see in
Sect.~\ref{part:compute_initial_norm1} the difficulties involved in
performing such a computation soundly using interval arithmetics.

Let $d_Q$ be the degree of $Q$. The development $D$ of $P/Q$ yields
an equation $P(z) = D(z).Q(z) + R(z).z^N$.
We have $P(z)/Q(z) = D(z) + R(z)/Q(z).z^N$ and thus
\begin{equation}
\| P/Q \|_1^{\geq N} = \| R/Q \|_1 \leq \| R \|_\infty . \| 1/Q \|_1
\end{equation}
The preceding sub-sections give a variety of methods for bounding
$\| 1/Q \|_1$ using the zeroes of $Q(z)$;
Section~\ref{part:norm1_rough_approximation} gives a rough method based on
lower bounds on the absolute values of the zeroes of $Q(z)$.
$\| R \|_\infty$ is bounded by explicit computation of $R$ using
interval arithmetics; as we shall see, we compute $D$ until the sign
of the terms is unknown --- that is, when the norm of the developed
signal is on the same order of magnitude as the numerical error on it,
which happens, experimentally, when the terms are small in absolute
values. Therefore, $\| R \|_\infty$ is small, and thus
the roughness of the approximation used $\| 1/Q \|_1$
does not matter much in practice.

%% In Section~\ref{part:compute_initial_norm1}, we shall explain how to
%% bound $\| P/Q \|_1^{<N}$ soundly and precisely. 
%% We also have
%% \begin{equation}
%% \| P/Q \|_1^{\geq N} \leq \sum_{k=0}^m |p_k|.\| 1/Q \|_1^{\geq N-k}
%%   \leq \| P \|_1 . \| 1/Q \|_1^{\geq N-m}
%% \end{equation}
%% and we explained in Sect.~\ref{part:inverse_product_norm} how to bound
%% $\| 1/Q \|_1^{\geq N-m}$.

\section{Precision properties of fixed- or floating-point operations}
\label{part:floating_point_error}
In this section, we shall recall a few facts on the errors introduced
by fixed- and floating-point arithmetics. They will be sufficient for
all our reasonings, without need for further knowledge about numerical
arithmetics.

Most types of numerical arithmetics, including
the widely used IEEE-754 floating-point arithmetic, implemented in
hardware in all current microcomputers, define the result of
elementary operations as follows: if $f$ is the ideal operation
(addition, subtraction, multiplication, division etc.) over
the real numbers and $\tilde{f}$
is the corresponding floating-point operation, then
$\tilde{f} = r \circ f$ where $r$ is a \emph{roundoff} function.
The roundoff function chooses a value $r(x)$ that can be exactly
represented in the used fixed- or floating-point data type, and is
very close to $x$; specifically, most systems, including all IEEE-754
systems, provide the following roundoff functions:%
\footnote{On Intel x86 systems, the description of the exact
properties of the floating-point arithmetics is complicated by the
fact that, by default, with most operating systems and languages, the
80287-compatible floating-point unit performs computations internally
using 80-bit long double precision numbers, even when the compiled
program
suggests the use of standard 64-bit double precision IEEE
numbers. Note that such usage of supplemental precision for
intermediate computations is allowed by the C~standard, for example. %TODO
The final result of the computation may therefore depend on
the register scheduling and optimizations performed by the
compiler.
%% Designers concerned about this may use library functions
%% such as \texttt{setfeprec()}, assembler instructions changing the
%% internal precision, or compiler options (such as \texttt{gcc}'s
%% \texttt{-ffloat-store}) that enforce storing all intermediate
%% quantities in memory.
Since we reason by \emph{maximal errors}, our
bounds are always sound (albeit pessimistic) in the face of such
complications, whatever the compiler and the system do.}
\begin{itemize}
\item round to $0$: $r(x)$ is the representable real nearest to $x$ in
  the direction of $0$;
\item round to $+\infty$: $r(x)$ is the representable real nearest to $x$ in
  the direction of $+\infty$;
\item round to $-\infty$: $r(x)$ is the representable real nearest to $x$ in
  the direction of $-\infty$;
\item round to nearest (generally, the default mode): $r(x)$ is the
  representable real nearest to~$x$.
\end{itemize}

In this description, we leave out the possible generation of special
values such as infinities ($+\infty$ and $-\infty$) and
\emph{not-a-number} (NaN), the latter indicating undefined results
such as $0/0$. We assume as a precondition to the numerical filters
that we analyze that they are not fed infinities or NaNs --- indeed,
in some DSP (digital signal processor) implementations, the hardware is
incapable of generating or using such values, and in many other
implementations the system is configured so that the generation or
usage of infinities issues an exception resulting in bringing
the system into a failure mode. Our framework provides constructive
methods for bounding \emph{any} floating-point quantity $x$ inside the
filters as $\|x\|_\infty \leq c_0 + \sum_{k=1}^n c_k.\|e_k\|_\infty$ where the
$e_k$ are the input streams of the system; it is quite easy to check
that the system does not overflow ($\|x\| < M$); one can even easily
provide some very wide sufficient conditions on the input
($\| e_k \|_\infty \leq (M-c_0) / (\sum_{k=1}^n c_k)$). We will not
include such conditions in our description, for the sake of simplicity.

For any arithmetic operation, the discrepancy between the ideal result
$x$ and the floating-point result $\tilde{x}$ is bounded, in absolute value, by
$\max(\eps_{\textrm{rel}} |x|,\eps_{\textrm{abs}})$
where $\eps_{\textrm{abs}}$ is the \emph{absolute error}
(the least positive floating-point
number)%
\footnote{The absolute error results from the \emph{underflow}
condition: a number close to 0 is rounded to 0. Contrary to overflow
(which generates infinities, or is configured to issue an exception),
underflow is generally a benign condition. However, it precludes
merely relying on relative error bounds if one wants to be sound.}
and $\eps_{\textrm{rel}}$ is the \emph{relative error}
incurred
($\eps_{\textrm{abs}}=2^{-1074} \simeq 4.94\cdot 10^{-324}$
and $\eps_{\textrm{rel}}=2^{-53} \simeq 1.11\cdot 10^{-16}$
for IEEE double precision operations, for the worst case with respect
to rounding modes).
We actually take the coarser inequality
\begin{equation}
|x - \tilde{x}| \leq \eps_{\textrm{rel}}|x|+\eps_{\textrm{abs}}
\end{equation}
See \cite{IEEE754} for more details on
floating-point numbers and \cite{Mine:ESOP04} for more about the
affine bound on the error.

In the case of fixed-point arithmetics, we have $\eps_\rel=0$ and
$\eps_\abs=\delta$ ($\delta$ is the smallest positive fixed-point
number) if the rounding mode is unknown (round to $+\infty$,
$-\infty$ etc.) and $\delta/2$ is it is the rounding mode is known
to be round-to-nearest.

\section{Compositional semantics: fixed- and floating-point}
\label{part:float_compositional}
In this section, we give and a compositional
\emph{abstract} semantics of filters on the floating-point numbers.

\subsection{Constraint on the errors}
Our abstract semantics characterizes a fixed- or floating-point filter
$\tilde{F}$ by:
\begin{itemize}
\item the exact semantics of the associated filter $F$ over the real
 numbers
\item an abstraction of the discrepancy $\Delta(I)=\tilde{F}(I)-F(I)$
  between the ideal and floating-point filters.
\end{itemize}

We transform $\tilde{F}(I)$ into the sum of a term that
we can bound very accurately using algebra and complex analysis, and a
nondeterministic input $\Delta(I)$ that we cannot analyze accurately
and soundly without considerable difficulties, but for which bounds
are available: assuming for the sake of simplicity
a single input and a single output and no initialization conditions,
we obtain an \emph{affine}, \emph{almost linear}
constraint on the $\| \Delta(I) \|\infty$ with respect to
$\| I \|_\infty$:
$\| \Delta(I) \|_\infty \leq \eps^F_\rel \| I \|_\infty + \eps^F_\abs$.
In short: since the filter is linear, the magnitude of the error is
(almost) linear.

We generalize this idea to the case of multiple inputs and outputs.
The abstract semantics characterizing $\Delta$ is given by
matrices $\eps_{\rel,T}^F \in \matrices_{n_o,n_i}(\bbR_+)$
and $\eps_{\rel,D}^F \in \matrices_{n_o,n_r}(\bbR_+)$
and a vector $\eps_\abs^F \in \bbR_+^{n_o}$
such that
\begin{equation}
\| F(I,R) - \tilde{F}(I,R) \|_\infty \leq
  \eps_{\rel,T}^F. N_\infty(I) + \eps_{\rel,D}^F. N_\infty(R) + \eps_\abs.
\end{equation}
where $\tilde{F}(I,R)$ is the output on the stream computed upon the
\emph{floating-point} numbers on input streams $I$ and initial values~$I$.

\subsection{Basic arithmetic blocks}
\begin{description}
\item[Plus] node implemented in floating point type $f$:
 $n_i=n_o=1$,
  $T=\begin{bmatrix}1 & 1\end{bmatrix}$, $D=0$,
  $\eps_{\rel,T}=\begin{bmatrix}\eps_\rel^f &\eps_\rel^f\end{bmatrix}$,
  $\eps_{\rel,D}=0$,
  $\eps_\abs = \eps_\abs^f$;
\item[Scale by $k$] node implemented in floating point
  type $f$: 
  $T=\begin{bmatrix}k\end{bmatrix}$, $D=0$,
  $\eps_{\rel,T} = |k|.\eps_\rel^f$,
  $\eps_{\rel,D}=0$;
  $\eps_\abs = \eps_\abs^f$;
\item[Delay without initializer] (delay for $n$ clock ticks):
  $T=\begin{bmatrix}z^n\end{bmatrix}$,
  $D=0$,
  $\eps_{\rel,T}=0$, $\eps_{\rel,D}=0$, $\eps_\abs=0$
\item[Unit delay with initializer]:
  $T=\begin{bmatrix}z\end{bmatrix}$,
  $D=\begin{bmatrix}1\end{bmatrix}$,
  $\eps_{\rel,T}=0$, $\eps_{\rel,D}=0$, $\eps_\abs=0$
\item[Parallel composition] block matrices and vectors:\\
  $\eps_{\rel,T}=
    \begin{bmatrix}\eps_{\rel,T}^1 & 0 \\ 0 & \eps_{\rel,T}^2\end{bmatrix}$,
  $\eps_{\rel,D}=
    \begin{bmatrix}\eps_{\rel,D}^1 & 0 \\ 0 & \eps_{\rel,D}^2\end{bmatrix}$,
  $\eps_\abs=\begin{bmatrix}\eps_\abs^1 \\ \eps_\abs^2\end{bmatrix}$.
\end{description}

\subsection{Serial composition}
The serial composition of two filters is more involved.
Let $F$ and $G$ be the ideal linear transfer functions of both
filters, and $\tilde{F}$ and $\tilde{G}$ the transfer functions
implemented over floating-point numbers.

We have $\forall I~ N_\infty(F(I) - \tilde{F}(I)) \leq
\eps_\rel^F . N_\infty(I) + \eps_\abs^F$
(\emph{mutatis mutandis} for $G$). We are interested in
$\eps=N_\infty(F(I) - \tilde{F}(I))$:
that is, a vector of positive numbers indexed by the outputs of the
system such that on every coordinate
$k$, the difference $\delta$ between output $k$ computed over the reals and the
floating-point numbers over the same input $I$ verifies
$\|\delta\|_\infty \leq \eps_k$. We extend $\leq$ to real vectors
coordinate-wise.

The following is easier to understand when each filter has a single input and
a single output; then, all vectors and matrices are scalars (either in
$\bbR$ or $\ratringr$, and
$N_x(v)$ is simply $\|v\|_x$.

The vector $R$ of (re)initialization values is split between $R^F$
(those concerning $F$) and $R^G$ (those concerning $G$).
We split the overall output error of the system between the part that was
introduced by the first filter (and then amplified or attenuated by
the second filter) and the part that was introduced by the second
filter, and use the triangle inequality:
{\footnotesize
\begin{multline}
N_\infty ((G \circ F) (I, R) - (\tilde{G} \circ \tilde{F})(I, R))\\
  \leq N_\infty(G \circ F (I) - G \circ \tilde{F} (I))
     + N_\infty(G \circ \tilde{F} (I) - \tilde{G} \circ \tilde{F}(I))\\
  \leq N_1(G).(F(I) - \tilde{F}(I))
     + \eps_{\rel,T}^G. N_\infty(\tilde{F}(I))
     + \eps_{\rel,D}^G. N_\infty(R^G)
     + \eps_\abs^G\\
  \leq N_1(G).(F(I) - \tilde{F}(I))
     + \eps_{\rel,T}^G.(N_\infty(\tilde{F}(I))+N_\infty(\tilde{F}(I)-F(I)))
     + \eps_{\rel,D}^G. N_\infty(R^G)
     + \eps_\abs^G\\
  \leq (N_1(G) + \eps_{\rel,T}^G).N_\infty(\tilde{F}(I)-F(I))
     + \eps_{\rel,D}^G. N_\infty(R^G)
     +  \eps_\rel^G. N_\infty(F(I)) + \eps_\abs^G\\
  \leq (N_1(G)+\eps_{\rel,T}^G).
       (\eps_{\rel,T}^F.N_\infty(I)+\eps_{\rel,D}^F.N_\infty(R^F)+\eps_\abs^F)\\
     + \eps_{\rel,T}^G. N_1(F). N_\infty(I)
     + \eps_{\rel,D}^G. N_\infty(R^G)
     + \eps_\abs^G\\
  \leq \left[ (N_1(G)+\eps_{\rel,T}^G). \eps_{\rel,T}^F
            + \eps_{\rel,T}^G . N_1(F) \right] . N_\infty(I)\\
     + \left[ (N_1(G)+\eps_\rel^G). \eps_{\rel,D}^F \right]. N_\infty(R^F)
     + \left[ \eps_{\rel,D}^G \right]. N_\infty(R^G)\\
     + \left[ (N_1(G)+\eps_\rel^G). \eps_\abs^F + \eps_\abs^G \right]
\end{multline}}
Thus $\eps_{\rel,T}^{G \circ F}= (N_1(G)+\eps_\rel^G). \eps_\rel^F
            \allowbreak + \allowbreak \eps_\rel^G . N_1(F)$,\\
$\eps_{\rel,D}^{G \circ F}= \begin{bmatrix}
  (N_1(G)+\eps_\rel^G). \eps_{\rel,D}^F & \eps_{\rel,D}^G
  \end{bmatrix}$,
and $\eps_\abs^{G \circ F}=
  (N_1(G)+\eps_\rel^G). \eps_\abs^F + \eps_\abs^G$.

\subsection{Feedback loops}
\label{part:inexact_feedback}
Let us call $o^{(n)}$ the vector of outputs of the filter at step $n$.
It is, ideally, a linear function of the current input, the preceding inputs,
and the preceding outputs.
$O_n = L(I_{\leq n}, O_{< N})$. Let us call $\tilde{L}$ the associated
floating-point function and $\tilde{O}$ the floating-point output of
the filter. Let us call $\Delta = \tilde{O}-O$.
\begin{multline}
\Delta_n
  = \tilde{L}(I_{\leq n}, \tilde{O}_{< N}) - L(I_{\leq n}, O_{< N})\\
  = \tilde{L}(I_{\leq n},\tilde{O}_{< N})-L(I_{\leq n},\tilde{O}_{<N}))
  +  L(I_{\leq n},\tilde{O}_{<N})-L(I_{\leq n}, O_{< N})\\
  = \left(
    \tilde{L}(I_{\leq n},\tilde{O}_{< N})-L(I_{\leq n},\tilde{O}_{<N}))
    \right) + L(0, \Delta_{<N})
\end{multline}
Let
$C_n = \tilde{L}(I_{\leq n},\tilde{O}_{< N})-L(I_{\leq n},\tilde{O}_{<N}))$
be the sequence of vectors of ``error creations'' at each iteration.
Then $\Delta$ verifies the equation $\Delta=C+z T^F_O.\Delta$.
As before, this means $\Delta=(\Id{n}-z T^F_O)^{-1}.C$
and thus that
$N_\infty(\Delta_{\leq n}) \leq
  N_1\left((\Id{n}-z T^F_O)^{-1}\right). N_\infty(C_{\leq n})$.

Let us split $\eps_{\rel,T}^F \in \matrices_{n,n+m}(\bbR_+)$ into
$\eps_{\rel,I}^F \in \matrices_{n,m}(\bbR_+)$ and
$\eps_{\rel,O}^F \in \matrices_{n,n}(\bbR_+)$. Then
\begin{multline}
N_\infty(C_{\leq n}) \leq \eps_{\rel,I}^F.N_\infty(I_{\leq N})
  + \eps_{\rel,O}^F.N_\infty(\tilde{O}_{< N})
  + \eps_{\rel,D}^F.N_\infty(R) + \eps_\abs^F\\
\leq \eps_{\rel,I}^F.N_\infty(I_{\leq N})
  + \eps_{\rel,O}^F.N_\infty(O_{< N})
  + \eps_{\rel,O}^F.N_\infty(
   \underbrace{\tilde{O}_N-O_{< N}}_{\Delta_{< N}})\\
  + \eps_{\rel,D}^F.N_\infty(R) + \eps_\abs^F
\end{multline}

But then, noting $A = N_1\left((\Id{n}-z T^F_O)^{-1}\right)$,
\begin{multline}
N_\infty(\Delta_{\leq n}) \leq
   A. (\eps_{\rel,I}^F.N_\infty(I_{\leq N})
  + \eps_{\rel,O}^F.N_1(T).N_\infty(I_{\leq N})\\
  + \eps_{\rel,O}^F.N_\infty(\Delta_{< N})
  + \eps_{\rel,D}^F.N_\infty(R)
  + \eps_\abs^F)
% a la relecture je ne vois pas d'ou peut sortir le terme suivant
%  + N_1(T^F_0).N_\infty(\Delta_{< N})
\end{multline}
Let $K_1 = A.\eps_{\rel,O}^F \in \matrices_{n,n}(\bbR_+)$
and \begin{equation}
K_2(\iota,\rho) =
  A.\left(\eps_{\rel,I}^F + \eps_{\rel,O}^F.N_1(T)). \iota
  + \eps_{\rel,D}^F.\rho
  + \eps_\abs^F \right.
\end{equation}
Then $N_\infty(\Delta_{\leq n}) \leq
  K_1. N_\infty(\Delta_{< n}) + K_2(N_\infty(I),N_\infty(R))$.
This means that the sequence $u_n=N_\infty(\Delta_{< n})$ verifies
$u_0=0$ and $u_{n+1} \leq K_1.u_n + K_2(N_\infty(I),N_\infty(R))$.
This implies that for all $n$, $u_n$ is less than the least fixed
point $L$ of $v \mapsto K_1.v+K_2(N_\infty(I),N_\infty(R))$.

Recall that the spectral radius of a matrix $M$ of real numbers is the
greatest absolute values of its eigenvalues.
If $K_1$ is contracting (spectral radius less than 1), then
$v \mapsto K_1.v+K_2(N_\infty(I),N_\infty(R))$
has a unique fixed point, by Banach's fixed
point theorem; and $1-K_1$ is invertible.
This fixed point is $v = (1-K_1)^{-1} K_2(N_\infty(I),N_\infty(R))$.
Let
$\eps_{\rel,T} =(1-K_1)^{-1}.A.
   \left(\eps_{\rel,I}^F + \eps_{\rel,O}^F.N_1(T)\right)$,
$\eps_{\rel,D} =(1-K_1)^{-1}.\eps_{\rel,D}^F$,
and $\eps_\abs =(1-K_1)^{-1}.A.\eps_\abs^F$.
Then
$N_\infty(\Delta) \leq \eps_{\rel,T}.N_\infty(I) +
  \eps_{\rel,D}.N_\infty(R) + \eps_\abs$.

Recall that
$K_1 = A.\eps_{\rel,O}^F \allowbreak \in \allowbreak
  \matrices_{n,n}(\bbR_+)$ where
$A$ is the matrix of norms
$N_1\left((\Id{n}-z T^F_O)^{-1}\right)$; $K_1$ bounds the amount of
floating-point imprecision that feeds back into the system. $A$ is the
amplification bounding matrix of the filter consisting merely of the
feedback loop of the original filter; if the original filter is stable
and well-designed, the coefficients of $A$ should be moderate. 
$\eps_{\rel,O}^F$ measures the creation of imprecision in one
iteration of the internal filter; if the filter is numerically
well-designed, then its coefficients are very small. On real-world
examples, $K_1$ was on the order of magnitude of $10^{-15}$.

This suggests an effective method for bounding from above
the various quantities of the form $(1-K_1)^{-1}.y$ that we listed,
where $y$ is a column vector (if $y$ is a matrix, then split it
into its column vectors).
\begin{equation}
d_\infty = (1-K_1)^{-1}.y = \sum_{k=0}^\infty K_1^k.y
\end{equation}
is the unique fixpoint of $\phi = x \mapsto K_1.x+y$, which
is monotonic and contracting.
Consider the matrix norm subordinate to
$\| \cdot \|_\infty$ on vectors:
\begin{equation}
\label{eqn:k1_norm}
\| K_1 \| = \sup_i \sum_j {k_1}_{i,j}
\end{equation}
This gives a rough bound on $d_\infty$:
\begin{equation}
 \| d_\infty \|_\infty \leq \sum_{k=0}^\infty \|K_1\|^k.\|y\|_\infty
  = \frac{\|y\|_\infty}{1-\|K_1\|}.
\end{equation}
Let $d_n = (x \mapsto K_1.x+y)^n(y) = \sum_{k=0}^n K_1^n.y$.
$d_\infty - d_n = K_1^{n+1}.d_\infty$,
thus
\begin{equation}
\|d_\infty - d_n\|_\infty \leq \frac{K_1^{n+1}}{1-\|K_1\|}. \|y\|_\infty.
\end{equation}
Therefore, the following is an upper bound on $d_\infty$:
\begin{equation}\label{eqn:k1_bound}
B = d_n + \left(-\frac{K_1^{n+1}}{\|K_1\|-1}. \|y\|_\infty\right).V_1
\end{equation}
where $V_1$ is a vector of ones of the same dimension as~$y$.
This computation may be effectively performed in floating-point
arithmetic in order to yield a sound upper bound by computing
Eqn.~\ref{eqn:k1_norm} and \ref{eqn:k1_bound} in round-to-$+\infty$
mode ($x \mapsto -1/x$ is monotonic).
Remark that we can directly prove the soundness of the resulting
$\tilde{B}$ by checking that
$K_1.\tilde{B}+y$ is less than $\tilde{B}$ (this checking phase,
though unnecessary assuming a sound implementation, may be cheaply
performed for the sake of security; while it is possible that the
result should be correct and the check fails, this seems very unlikely
in practice, and can be worked around by choosing a slightly larger
$\tilde{B}$).

%% Let $\tilde{\phi}$ be the implementation of $\phi$ in floating-point
%% arithmetic in round to $+\infinity$ node; clearly
%% $\tilde{\phi} \geq \phi$. If we can exhibit $s$ such that
%% $\tilde{\phi}(s) \leq s$, then $s \geq d_\infty$.

\subsection{Trading some accuracy for computation speed; nonlinear elements}
\label{part:trading_accuracy_for_speed}
\begin{figure}
\begin{center}
\input{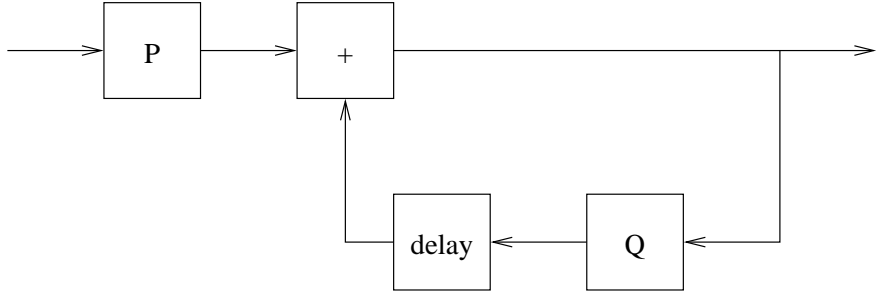}
\caption{An ideal filter equivalent to a filter of Z-transform
$P(z)/(1-Q(z))$.}
\label{fig:ideal_equivalent}
\end{center}
\end{figure}
We have split the behavior of the filter into the sum of the convolution of
the input signal by the power development of a rational function,
representing the exact behavior, and some error term. If we compute
the rational functions exactly over $\ratring{\bbQ}$, then the
rational coefficients might grow expensively large. It seems silly
to use high precision for the coefficients of a system parameterized
by floating-point numbers and implemented with floating-point
errors. Indeed, we may reduce the precision of the coefficients of the
rational function at the expense of adding to the margin of error.

An ideal filter of Z-transform the rational function $P(z)/(1-Q(z))$ where
$P(z)=\sum_{k=0}^{d_p} p_k z^k$ and
$Q(z)=\sum_{k=1}^{d_q} q_k z^k$ with non initialization condition
is equivalent to a filter 
with ideal input Z-transform $P$ and ideal feedback Z-transform $Q$
(Fig~\ref{fig:ideal_equivalent}).
Such a filter may be soundly approximated by a non-ideal
feedback filter $F^\sharp$ with $T^{F^\sharp}_I = P^\sharp$,
$T^{F^\sharp}_O = Q^\sharp$, $\eps_{\rel,I}=\| P^\sharp - P\|_1$,
$\eps_{\rel,I}=\| Q^\sharp - Q\|_1$, $\eps_\abs=0$, which we know how
to solve from Sect.~\ref{part:inexact_feedback}.

More generally: a filter $F$ may be approximated by a filter
$F^\sharp$ with transfer function $T^{F^\sharp} = T^G$,
$\eps^{F^\sharp}_{\rel,T} = \eps^F_{\rel,T} + \eps^G_{\rel,T}$,
$\eps^{F^\sharp}_{\rel,D} = \eps^F_{\rel,D} + \eps^G_{\rel,D}$,
$\eps^{F^\sharp}_\abs = \eps^F_\abs$ where
$G$ is the feedback filter with internal filter $H$ given
$T^H_I = P^\sharp$,
$T^H_O = Q^\sharp$, $\eps^H_{\rel,I}=\| P^\sharp - P\|_1$,
$\eps^H_{\rel,I}=\| Q^\sharp - Q\|_1$, $\eps^H_\abs=0$.

Note that this gives a generic method for approximating non-linear
elements occuring in filters, provided that it is possible to split
them into a linear part and a nonlinear part, the output of which can
be bounded by an affine function of bounds on the absolute value of
the inputs.

\section{Numerical considerations}
\label{part:numerical}
We have so far given many mathematical formulas that are exact in the
\emph{real} field. In this section, we explain how to obtain sound
abstractions for these formulas using floating-point arithmetics.

\subsection{Interval arithmetics}
IEEE floating-point arithmetics \cite{IEEE754} and good extended
precision libraries such as MPFR \cite{GMP} provide functions
computing \emph{upward rounded} (or \emph{rounded-to-$+\infty$})
and  \emph{downward rounded} (or \emph{rounded-to-$-\infty$})
results: that is, if $f(x_1,\ldots,x_n)$ is the exact operation on
real numbers and $\tilde{f}^-$ and $\tilde{f}^+$
are the associated floating-point downward and upward operations,
then $f(x_1, \ldots, x_n)$ is guaranteed to be in the interval
$[\tilde{f}^-(x_1, \ldots, x_n), \tilde{f}^+(x_1,\ldots,x_n)]$,
which will guarantee the \emph{soundness} of our approach.
Furthermore, for many operations, 
$\tilde{f}^-(x_1, \ldots, x_n)$ and $\tilde{f}^+(x_1,\ldots,x_n)$
are guaranteed to be optimal; that is, no better bounds can be
provided within the desired floating-point format; this will guarantee
\emph{optimality} of certain of our elementary operations.

\subsection{Approximate algebraic computations}
\label{part:approximate_algebraic_structures}
In many occasions, we ideally would like to compute on real
polynomials $P=\sum_{k=1}^n p_k z^k$ but instead we compute on
floating-point polynomials $\tilde{P}=\sum_{k=1}^n [l_k,h_k] z^k$
abstracting the set $\gamma(\tilde{P})$ of polynomials $P$ such that
$\forall k~p_k \in [l_k,h_k]$. In practice, it will often be necessary
that $0 \notin [l_k,h_k]$ in order to avoid uncertainties on the
degree of the polynomial. All the usual polynomial operations
(addition, multiplication by a scalar, subtraction, multiplication)
may be abstracted using interval arithmetics. We also include a test
$\textrm{contains}_0(\tilde{P})$
whether the null polynomial is in $\gamma(\tilde{P})$.
We call this structure an \emph{abstract ring}.

Given a abstract ring $R$, we construct the \emph{abstract field} of
fractions over that ring using the following operations:
$p_1/q_1 + p_2/q_2 \allowbreak = \allowbreak (p_1 q_2 + p_2 q_1) / (q_1 q_2)$,
$k.(p / q) = (k . p) / q$,
$(p_1 / q_1) . (p_2 / q_2) \allowbreak = \allowbreak
  (p_1.p_2) \allowbreak / \allowbreak (q_1.q_2)$,
$(p_1 / q_1) / (p_2 / q_2) \allowbreak = \allowbreak
  (p_1.q_2) \allowbreak / \allowbreak (q_1.p_2)$,
$\textrm{contains}_0(p/q) = \textrm{contains}_0(p)$.
We can make a simple attempt at reducing the fractions by checking
that there are no trivial cancellations between the numerator and
denominator in products and quotients.

Given an abstract ring $K$, we construct the \emph{abstract ring of
matrices} over that ring with the usual operations:
if $M=A+B$, $m_{i,j}=a_{i,j}+b_{i,j}$; if $M=A.B$,
$m_{i,j} = \sum_k a_{i,k}.b_{k,j}$. If $K$ is an abstract field, we
can also implement Gaussian elimination in order to compute
$A^{-1}.B$ given a square matrix $A$ and a matrix $B$. When we look
for a pivot, we select elements $e$
such that $\textrm{contains}_0(e)$ is false.

Unfortunately, computations on such approximate structures may yield
unfavorable results. In particular, the absence of simplification
between the numerator and denominator may yield fractions
$\tilde{P}(z)/\tilde{Q}(z)$ where $\tilde{P}$ and $\tilde{Q}$ have
some common zeroes. The spurious poles that are introduced  not be
that much of a problem if we use partial fraction decomposition
(Sect.~\ref{part:partial_fraction_decomposition}), for they will yield
very small coefficients in the decomposition; however, they will make
the computations more complex. If using the simple tail bounds of
Sect.~\ref{part:inverse_product_norm}, the results may be considerably
worse.

A solution is to perform all computations on rational functions
exactly over $\ratring{\bbQ}$. Then, cancellation between a numerator
and a denominator can be performed exactly by division by their
greatest common divisor, which is obtained from Euclide's algorithm
over the Euclidean division of polynomials. No spurious poles may be
introduced. However, on large filter networks, exact computations may
produce exceedingly large integer numerators and denominators. It is
then possible to apply the approximation scheme of
Sect.~\ref{part:trading_accuracy_for_speed} in order to trade speed
for potential precision. This is the solution that we implemented in
our system: exact computations on rational numbers and safe
approximations to limit the length of the numbers involved in the
computations.

\subsection{Computation of developments}
\label{part:compute_initial_norm1}
When bounding the norm $\| P/Q \|_1$ of a series quotient of two polynomials,
we split the series into its $N$ initial terms of development, which we compute
explicitly, and a tail whose norm we bound.
The first idea is to compute the $N$ first terms of the series by
quotienting the series, as explained in
Sect.~\ref{part:formal_power_series} or, equivalently, by running the
filter for $N$ iterations on the Dirac input $1,0,0,\ldots$. In order
to provide a sound result, one would work using interval arithmetics
over floating-point numbers.
However, as already noted by Feret, after some number of iterations
the sign of the terms becomes unknown and then the magnitude of the
terms increase fast; it is therefore indicated to compute the
development until the first term of unknown sign is reached, and
assign $N$ accordingly (one may still also enforce a maximal number of
iterations $N_{\max}$).
In order to be able to develop the quotient further with good
precision, one can use a library of extended-precision floating-point
computations with selectable rounding direction, such as the MPFR
library now part of GNU~MP \cite{GMP}.

%% Another occasion in which we have to compute the norm of the initial
%% development of a fraction is the scheme described in
%% Sect.~\ref{part:inverse_product_norm}, where we develop
%% $1/\prod_k (z-|\xi_k|)$ where the $\xi$ are the roots of a polynomial,
%% known with some bounded imprecision (so $|xi_k|$ is known only as
%% belonging to some interval).
%% The phenomenon explained in the preceding paragraph --- after a number
%% of iterations, the sign of the coefficients becomes unknown and the
%% magnitude of the errors explode --- also
%% occurs here, which is all the more distressing since the exact
%% coefficients are known to be positive!
%% Because in this case the coefficients of
%% the $Q$ denominator polynomial are themselves known to some limited
%% precision, the usefulness of extended precision computations is
%% limited --- one would have to obtain high precision on the $\xi$,
%% which may be difficult (most available libraries for obtaining roots
%% are implemented using IEEE floating-point arithmetics, not
%% extended-precision libraries) and costly.
%% On the other hand, here we known the \emph{exact form} of the
%% coefficients: they are obtained by the convolution of the developments
%% of $1/(z-|\xi_k|)$, that is, $\sum_i |\xi|^{-(i+1)} z^i$. Given a
%% (small) interval for $|\xi|$, it is possible to bound
%% $|\xi|^{-(i+1)}$ with high precision. The bounds on
%% $\|1/\prod_k (z-|\xi_k|)\|_1^{<N}$ thus obtained are considerably more
%% precise than those obtained by development of the quotient.

\subsection{Bounding the roots}
In order to bound $\| P/Q \|_1$, where $P$ and $Q$ may possibly be
given using interval coefficients, we have to bound the roots of $Q$.
More formally, we have to solve the following problem:
given an interval polynomial
$\tilde{P}(z) = \sum_{k=1}^n [l_k; h_k] z^k$ such that
$0 \notin [l_n,h_n]$,
find a family $(\tilde{\xi}_k,\rho_k)_{1 \leq k \leq n}$
($\xi_k \in \bbC$ with $\Re \xi_k$ and $\Im \xi_k$ floating-point
numbers, $\rho_k \in \bbR_+$ a floating-point number) such that for
any polynomial $P = \sum_{k=1}^n p_k$ such that
$\forall k~ p_k \in [l_k,h_k]$,
then, up to a permutation, the $n$ roots $(\xi_k)_{1 \leq k \leq n}$
of $P$ are such that $\xi_k \in D(\tilde{\xi}_k,\rho_k)$ where
$D(z,r)$ is the closed disc of center $z$ and radius~$r$.

Often, what we need is actually bounds on the $|\xi_k|$; this can
easily be obtained from the preceding bounds using interval arithmetic
on plus, minus, multiply and square root.

Our coefficients are intervals $[l_k,h_k]$ in order to accommodate
possible errors of floating-point computations. As a consequence, it
is expected that $h_k-l_k$ are small. This suggests to us a two-step
method for obtaining the desired bounds:
\begin{enumerate}
\item Use an efficient and, in practice, very accurate algorithm to
  obtain \emph{approximations} $x_j$ to the roots of
  $\sum_{k=1}^n \frac{l_k+h_k}{2} z^k$ (the ``midpoint polynomial'').
\item From those approximations, obtain bounds on the radius of the
  error committed.
\end{enumerate}

There exist a variety of methods and implementations to perform the
first point. We used \texttt{gsl\_poly\_complex\_solve} of the
GNU~Scientific Library \cite{GSL}, which is based on an eigenvalue
decomposition of the companion matrix.

For the second step, Rump describes a variety of bounding methods
\cite{Rump_JCADP03} which take a polynomial and approximate roots as
an input and output error radii; these methods may be performed
using interval arithmetics. We implemented the simplest and roughest one:
$\xi_j$ is in a closed disc of center $x_j-p_j$ and radius $|p_j|$
where
\begin{equation}
p_j = \frac{n P(x_j)}{p_n \prod_{k \neq j} x_j - x_k},
\end{equation}
which is easily implemented using interval arithmetics ($P$ becomes
$\tilde{P}$ etc.).

\section{Implementation and case studies}
\label{part:experiments}
We implemented the algorithms described above in a simple Objective
Caml \cite{OCaml} program: filters are represented by a record of all their
characteristics (transfer matrices, bounding matrices); functions (in
the OCaml) sense construct filter records, or perform composition
operations.

The formal computations on fractions are performed over $\bbQ$,
implemented using GNU~MP's \texttt{mpq} type \cite{GMP}. We initially
considered using MPFR \cite{MPFR}, an extended precision library with
sound rounding modes, for interval computations; instead, we simply
use the IEEE-754 rounding modes of the hardware floating-point unit,
which is much faster.

\subsection{Composition of TF2 filters}
Let us recall the example of Sect.~\ref{part:reconstitution}. It is a
composition of two TF2 filters with a feedback loop around it.
The serial composition of the filter in
Fig.~\ref{fig:compound_filter_example1} and
another TF2 filter, all with realistic coefficients,
is analyzed in about 0.10~s on a recent PC;
the analyzer finds that $\| S \| \leq g \| E \|$ with
$g \simeq 2$, with $\eps_\rel \simeq 10^{-12}$
and $\eps_\abs \simeq 10^{-305}$.

The power series developments of rational functions
(Sect.~\ref{part:compute_initial_norm1}) are done up to around
the 27th order.

\subsection{Complex nonlinear iterated filter}

We now consider a nonlinear, iterated filter due to Roozbehani et
al. \cite{RoozbehaniEtAl_HSCC05}[§5]. We first analyze separately
\texttt{filter1()} (2nd-order linear filter) and
\texttt{filter2()} (2nd-order affine filter).
So as to simplify matters, we do not give the transfer functions using
matrices, matrices inverses etc. but as the solution of a system of
linear equations over polynomials in $z$. We obtain that system very
simply from the program: whenever we see an assignment $x := e$, we
turn it into an equation $x = e$ (we assume without loss of
generalities that variables are only assigned once in a single
iteration step), where $e$ is the original expression where a variable $v$
that has not yet been assigned in the current iteration is
replaced by $i_v + z.v$, $i_v$ standing for the initialization value
of $v$.

{\small\tt
\begin{tabbing}
void filter1 () \{\\
\ \ static float E[2], S[2];\\
\ \ if (INIT1) \{\\
\ \ \ \ S[0] = X; P = X;\\
\ \ \ \ E[0] = X; E[1]=0; S[1]=0;\\
\ \ \} else \{\\
\ \ \ \  P =0.5*X-0.7*E[0] +0.4*E[1] \= $p=0.5e -0.7(i_{e_0}+z.e_0)$\\
\ \ \ \ \ \ +1.5*S[0]-S[1]*0.7; \>\quad $+0.4(i_{e_1}+z.e_1)$\\
\>\quad $+1.5(i_{s_0}+z.s_0) -0.7(i_{s_1}+z.s_1)$\\
\ \ \ \ E[1] = E[0];\> $e_1=i_{e_0}+z.e_0$\\
\ \ \ \ E[0] = X; \> $e_0=e$\\
\ \ \ \ S[1] = S[0];\> $s_1=i_{s_1}+z.e_1$\\
\ \ \ \ S[0] = P;\> $s_0 = p$\\
\ \ \ \ X=P/6+S[1]/5;\> $x=p/6+s_1/5$\\
\ \ \}\\
\}
\end{tabbing}}
We call $e$ the input value for \texttt{X}.
We solve the system and obtain
$x=Q.e+Q_{i_{e_0}}.i_{e_0}+Q_{i_{e_1}}.i_{e_1}
      +Q_{i_{s_0}}.i_{s_0}+Q_{i_{s_1}}.i_{s_1}$.
The common denominator of the $Q$ fractions is $10-15z+7z^2$, which
has complex conjugate roots $z$ such that $|z| \simeq 1.2$.
$i_{e_1}=i_{s_1}=0$ and $i_{e_0}=i_{s_0}=\iota$ (the last value
for input $e$ such that \texttt{INIT1} is true), thus
$\| x \|_\infty \leq \|Q\|_1 . \|e\|_\infty+\| Q_{i_{e_0}}+Q_{i_{s_0}}\|_\infty.\|\iota\|$.
With a precondition $\| e\|_\infty \leq 400$, this yields $\| x \|_\infty <
339$. If we take the coarser inequality
$\| x \|_\infty \leq \|Q\|_1 . \|e\|_\infty+
                (\| Q_{i_{e_0}}\|_\infty +\|Q_{i_{s_0}}\|_\infty).\|\iota\|$ we
get $\| x \|_\infty < 528$. Roozbehani et al. find a bound $\simeq 531$.

{\small\tt
\begin{tabbing}
void filter2 () \{\\
\ \ static float E2[2], S2[2];\\
\ \ if (INIT2) \{\\
\ \ \ \ S2[0] =0.5*X; P = X;\\
\ \ \ \ E2[0] = 0.8*X; E2[1]=0; S2[1]=0;\\
\ \ \} else \{\\
\ \ \ \ P =0.3*X-E2[0]*0.2+E2[1]*1.4 \= $p=0.3e-0.2(i_{e_0}+z.e_0)$\\
\ \ \ \ \ \ +S2[0]*0.5-S2[1]*1.7; \> \quad $+1.4(i_{e_1}+z.e_1)$\\
\>\quad $+0.5(i_{s_0}+z.s_0)+1.7(i_{s_1}+z.s_1)$\\
\ \ \ \ E2[1] = 0.5*E2[0];\> $e_1=0.5(i_{e_0}+z.e_0)$\\
\ \ \ \ E2[0] = 2*X; \> $e_0=2e$\\
\ \ \ \ S2[1] = S2[0]+10; \> $s_1=i_{s_0}+z.s_0+\tau$\\
\ \ \ \ S2[0] = P/2+S2[1]/3; \> $s_0=p/2+s_1/3$\\
\ \ \ \ X=P/8+S2[1]/10; \> $x=p/8+s_1/10$\\
\ \ \}\\
\}
\end{tabbing}}
We proceed similarily (with the introduction of $\tau=10/(1-z)$ and obtain
$x=Q.e+Q_{i_{e_0}}.i_{e_0}+Q_{i_{e_1}}.i_{e_1}
      +Q_{i_{s_0}}.i_{s_0}+Q_{i_{s_1}}.i_{s_1} + Q_c$.
The common denominator of the $Q$ is $60+35z+51 z^2$, with complex
conjugate roots $z$ such that $|z| \simeq 1.08$.
Then
$\| x \|_\infty \leq \|Q\|_1.\|e\|_\infty+\| 0.8 Q_{i_{e_0}}+0.5
                Q_{i_{s_0}}\|_\infty.\|\iota\| + \| Q_c\|_\infty$.
This yields $\| x \|_\infty \leq 1105$. 

The two linear filters are combined into an iterated nonlinear filter.
\texttt{filter1()} (resp. \texttt{filter2()}) is run with a
pre-condition of $\texttt{X} \in [-400,400]$
(resp. $[-800,800]$). We replace the call to the filter by its postcondition
$\texttt{X} \in [-339,339]$ (resp. $\texttt{X} \in [-1105,1105]$).

{\small\noindent\tt%
void main () \{\\
\hbox{~~}X = 0;\\
\hbox{~~}INIT1 = TRUE; INIT2=TRUE;\\
\hbox{~~}while (TRUE) \{\\
\hbox{~~~~}X = 0.98 * X + 85;\\
\hbox{~~~~}if (abs(X)<= 400) \{\\
\hbox{~~~~~~}filter1 ();\\
\hbox{~~~~~~}X=X+100;\\
\hbox{~~~~~~}INIT1=FALSE;\\
\hbox{~~~~}\} else\\
\hbox{~~~~}if (abs(X)<=800) \{\\
\hbox{~~~~~~}filter2();\\
\hbox{~~~~~~}X=X-50;\\
\hbox{~~~~~~}INIT2=FALSE;\\
\hbox{~~~~}\}\\
\ \ \}\}
}

The program then can be abstracted into:\\
{\tt while (TRUE) \{\\
\hbox{~~}X = 0.98 * X + 85;\\
\hbox{~~}maybe choose X in $[-1155, 1055]$;\\
\}}

We obtain $\texttt{X} \in [-1155,4250.02]$ by running
Astr\'ee with a large number of narrowing iterations, whereas Astr\'ee
cannot analyze the original program precisely and cannot bound
\texttt{X}.
In this case, the exact solution $[-1155,4250]$ ($x=0.98x+85$ has for
unique solution $x=4250$) could have been computed algebraically, but
in more complex filters this would not have been the case. Roozbehani
et al. have a bound of $4560$.

Note that the non-abstracted
program converges to a value $\simeq 205$, with $\texttt{X} \in [0,
209]$. However, this very simple program illustrates our methodology
for compositional analysis: finding the optimal solution is possible
here because the program is simple, but would not be possible in
practice  if we had added more nonlinear behavior and nondeterministic
inputs, as in real-life reactive code; whereas by analyzing precisely
each linear filter and plugging the results back into a generic
analyzer, we get reasonable results.

\section{Related works}
In the field of digital signal processing, some sizable literature
has been devoted to the study of the effects of fixed-point and
floating-point errors on numerical filters. In the area of fixed-point
computation, bounds on the sizes of the various operands are of
paramount importance: operands that leave the prescribed range will
undergo saturation and the output signal will be distorted. For these
reasons, operands are scaled so as not to produce digital saturation;
yet, the scale factor should be made large enough that rounding errors
are very small compared to the typical magnitude of the
signal. While the fact that the l1-norm of the convolution kernel is
what matters for judging overflow, it is argued that this norm is
``overly pessimistic'' \cite[§11.3]{Jackson_Digital_filters}
\cite[eq~13]{Jackson70_Bell}, not to mention the difficulties in
estimating it. In practice, filter designers have
preferred criteria that indicate no saturation for most
``commonplace'' inputs, excluding pathological inputs. Our vision is
different: our results must be sound in all circumstances, even
pathological inputs.

The impact of fixed- and floating-point errors in digital filters was
classically studied from by modeling the errors as random sources of
known distribution, independent of each other and with no
temporal correlation (i.e. correlations between successive values)
\cite{BomarElAl_IEEETransCircuits97,Rao_IEEETransSignal92}.
These assumptions are, in reality, false: the computational process is
fully deterministic, and not random; the computations are generally
interdependent (all computations inside a filter depend on the past of
the input variables); and there are temporal correlations.
However, circuit designers are concerned with the spectral
distribution of output noise \cite{Jackson70_Bell},
and optimization of hardware or software
implementations with respect to this noise, and these tools are
adequate for this. On the other hand, we merely aim at providing sound
bounds for the outputs of the system, but the bounds that we provide
must be sound without any extra and unfounded suppositions.

J.~Feret has proposed an abstract domain for analyzing programs
comprising digital linear filters \cite{Feret_ESOP04}. He provides
effective bounds for first and second degree filters. In comparison,
we consider more complex filter networks, in a compositional fashion;
but we analyze specifications, and not C code (which is usually
compiled from those specifications, with considerable loss of
structure). Another difference is that we do not perform abstract
iterations.
Feret's method currently considers only second-order
filters (i.e. TF2), though it may be possible to adapt it to
higher-order filters. On second-order filters, the bounds computed by
Feret's method and the method in this paper are very close (since both
are based on a development of the convolution kernel, though they use
different methods of tail estimation).

Lamb et al. \cite{Lamb_et_al_PLDI03} have proposed effective methods,
based on linear algebra, for computing equivalent filters for DSP
optimization. They do not compute bounds, nor do they study
floating-point errors.

Roozbehani et al. \cite{RoozbehaniEtAl_HSCC05}
find program invariants by Lagrangian relaxation and semidefinite
programming, with quadratic invariants. In order to make problems
tractable, they too apply a blockwise abstraction. The class of
programs that they may analyze directly is potentially larger, but the
results are less precise than our method on some linear filters. They
do not handle floating-point imprecisions (though this can perhaps be
added to their framework).

One possible application of our method would be to integrate it as a
pre-analysis pass of a tool such as Astr\'ee
\cite{ASTREE_ESOP05}. Astr\'ee computes bounds on all floating-point variables
inside the analyzed program, in order to prove the absence of errors
such as overflow. In order to do so, it needs to compute reasonably
accurate bounds on the behavior of linear filters. A typical
fly-by-wire controller contains dozens of TF2 filters, some of which
may be integrated into more complex feedback loops; in some cases,
separate analysis of the filters may yield too coarse bounds.

\section{Conclusions and future works}
We have proposed effective methods for providing sound bounds on the
outcome of complex linear filters from their flow-diagram
specifications, as found in many
applications. Computation times are modest; furthermore, the nature
of the results of the analysis may be used for modular analyses ---
the analysis results of a sub-filter can be stored and never be
recomputed until the sub-filter changes.

The usefulness of these methods is twofold. First, they could
be directly implemented in the graphical user interface for designing
circuits. Users may then be able to compute gains or to check the
stability of filters, taking into account floating-point errors (which
conventional Z-transform techniques do not consider). Second, they can
be used as a way to automatically obtain static analysis
``transformers'' or ``transfer functions'': a static analysis tool
such as Astrée may detect that some program sequence implements such
or such complex linear filter, and apply some invariant relation
computed using the techniques in that paper.

In future works, we will examine the case of non-linear filters and
compositional, modular analysis. The analysis of a combination of linear and
non-linear filters can be done in two ways or a combination thereof:
\begin{itemize}
\item
the overall behavior of a nonlinear filter may be constrained by some
input-output relationship such as $\| O \|_\infty \leq (1+\epsilon) \|
I \|_\infty$ (example of a rate limiter), and this input-output
relationship can be integrated into the abstract semantics as in
Part~\ref{part:float_compositional};
\item
the overall behavior of a linear filter can be precisely bounded, and
this bound information can be fed into an analysis of a larger
nonlinear filter, such as one based on statically computed
relationships between intervals \cite{Monniaux_SAS07}
\end{itemize}

\bibliographystyle{plain}
\bibliography{caml,numerical,signal,absint,David_Monniaux,engineering}

\appendix
%\section{Mathematical lemmas}
For any matrix $M$, let us note $\minor_{i,j}(M)$ the
determinant of the matrix obtained by removing line~$i$ and column~$j$
from~$M$.
We recall that for any matrix $M$ of dimension $n$
\begin{equation}
\det(M)=\sum_{j=1}^n (-1)^{n-1} m_{i,j}.\minor_{1,j}(M)
\end{equation}
and that the determinant is $n$-linear.
Recall that for any matrix $M$ of invertible determinant,
\begin{equation}
M^{-1}=\det(M)^{-1}.\begin{bmatrix}\minor_{i,j}(M)\end{bmatrix}^t
\end{equation}
\label{eqn:minors_inversion}

\begin{lem}
If $A \in \matrices_{n,n}(\ratringr)$, then there exists
$B \in \ratringr$ such that
$\det(\Id{n}-zA)=1-zB$.
\end{lem}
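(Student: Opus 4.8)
The plan is to work modulo $z$, using the fact (recalled in the footnote defining $\ratringr$) that $\ratringr$ is the localization of the polynomial ring at the prime ideal $(z)$, hence a local ring whose maximal ideal is exactly the principal ideal $(z)$. The key consequence I would exploit is that every $f \in \ratringr$ has a well-defined value $f(0) \in \bbQ$, obtained by evaluating a representing fraction $P/Q$ at $0$ (legitimate because $Q(0) \neq 0$ by definition of the ring), and that $f \mapsto f(0)$ is a ring homomorphism $\ratringr \to \bbQ$. Its kernel is precisely $(z)$, so that for any $f \in \ratringr$ one has the equivalence $f(0) = 0 \iff f \in (z) \iff f = z\,g$ for some $g \in \ratringr$. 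This last equivalence is exactly what lets me extract a quotient that stays \emph{inside} the ring.

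With this in hand the argument is short. First I would note that $\det$ is a fixed polynomial expression in the matrix entries, so it commutes with the evaluation homomorphism: $\det(M)(0) = \det\big(M(0)\big)$, where $M(0)$ denotes the entrywise evaluation at $z=0$. Applying this to $M = \Id{n} - zA$, whose entries are $m_{i,j} = \delta_{i,j} - z\,a_{i,j}$ with $a_{i,j} \in \ratringr$, each entry evaluates to $m_{i,j}(0) = \delta_{i,j}$, so $M(0) = \Id{n}$ and therefore $\det(\Id{n} - zA)(0) = \det(\Id{n}) = 1$. Since $\det(\Id{n} - zA) \in \ratringr$ (the ring is closed under the sums and products defining a determinant), the element $\det(\Id{n} - zA) - 1$ vanishes at $0$, hence lies in $(z)$, hence equals $z\,C$ for some $C \in \ratringr$. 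Setting $B = -C$ yields $\det(\Id{n} - zA) = 1 - zB$ with $B \in \ratringr$, as required.

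A concrete alternative, matching the cofactor-expansion and $n$-linearity formulas recalled just above the statement, is to expand $\det(\Id{n} - zA)$ as the signed sum over permutations $\sigma$ of $\prod_i m_{i,\sigma(i)}$: the only monomial free of any factor $z$ comes from $\sigma = \mathrm{id}$ taking the constant part $\delta_{i,i} = 1$ of each diagonal entry, giving $1$, while every other contribution carries at least one factor $-z\,a_{i,j}$ and so lies in $z\,\ratringr$; an easy induction on $n$ via expansion along the first row formalizes this. I do not expect a genuine obstacle here, but the one point requiring care in either route is the same: I must argue that the cofactor (or the remainder $\det(\Id{n}-zA)-1$) divided by $z$ remains in $\ratringr$ rather than merely in a larger ring such as $\bbQ[[z]]$. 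This is precisely the assertion that $(z)$ is the maximal ideal of the localization, which is why emphasizing the local-ring structure of $\ratringr$ is the crux of the proof.
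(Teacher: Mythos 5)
Your proposal is correct, but its primary route is genuinely different from the paper's. The paper argues by induction on $n$: it expands $\det(\Id{n}-zA)$ along the first row, observes that every term except the $(1,1)$ one carries an explicit factor $z$ and that all minors of a matrix over $\ratringr$ lie in $\ratringr$ (it being a ring), and then applies the induction hypothesis to $\minor_{1,1}(\Id{n}-zA)$, which is itself of the form $\det(\Id{n-1}-zA')$; collecting terms gives $1-zB$. You instead invoke the structure of $\ratringr$ as a localization: evaluation at $z=0$ is a well-defined ring homomorphism $\ratringr\to\bbQ$ whose kernel is exactly the ideal $(z)$ (if $P(0)=0$ and $Q(0)\neq 0$ then $z$ divides $P$ in $\bbQ[z]$, so $P/Q\in z\,\ratringr$), determinants commute with entrywise ring homomorphisms, and $(\Id{n}-zA)(0)=\Id{n}$, so $\det(\Id{n}-zA)-1$ lies in the kernel and hence equals $zC$ with $C\in\ratringr$. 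You correctly isolate the one point where a naive argument could leak out of the ring --- the division of the remainder by $z$ must land in $\ratringr$ and not merely in $\bbQ[[z]]$ --- and your identification of the kernel with $(z)$ disposes of it. What each approach buys: yours is shorter, avoids induction entirely, and generalizes verbatim to any matrix over any localization-type ring where evaluation at the relevant point makes sense; the paper's is more elementary and self-contained, using only the cofactor-expansion facts it recalls immediately before the lemma (which it needs anyway, since the subsequent corollary inverts $\Id{n}-zA$ via the adjugate formula). Your second, permutation-expansion sketch is essentially the paper's computation in disguise --- indeed the ``induction via expansion along the first row'' you mention to formalize it is precisely the paper's proof.
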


\begin{proof}
Proof by induction on $n$. The case $n=1$ is trivial.
Now let us consider $n>1$.
\begin{multline}
\det(\Id{n}-zA)\\
=(1-z a_{1,1}) \minor_{1,1}(\Id{n}-zA)
  + \sum_{j=2}^n (-1)^n z a_{1,j} \minor_{1,j}(\Id{n}-zA)\\
= \minor_{1,1}(\Id{n}-zA)
  + z \sum_{j=1}^n (-1)^n z a_{1,j} \minor_{1,j}(\Id{n}-zA)
\end{multline}
The result follows by the application of the induction hypothesis,
and the fact that $B \ratringr$ is a ring and thus the determinant
of any matrix over that ring is itself in the ring.
\end{proof}

\begin{cor}
If $A \in \matrices_{n,n}(\ratringr)$, then 
$\Id{n}-zA)$ has an inverse in $\matrices_{n,n}(\ratringr)$.
\label{cor:feedback_matrix_inversion}
\end{cor}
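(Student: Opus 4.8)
The plan is to combine the preceding lemma with the adjugate (cofactor) formula for the matrix inverse recalled just above. Writing $M = \Id{n} - zA$, the lemma gives $\det(M) = 1 - zB$ for some $B \in \ratringr$. The whole argument then reduces to two observations: that $\det(M)$ is a \emph{unit} of the ring $\ratringr$, and that the adjugate of $M$ already has entries in $\ratringr$. Once both are in hand, the inverse formula produces $M^{-1}$ with coefficients in $\ratringr$.

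First I would establish that $1 - zB$ is invertible in $\ratringr$. Recall that $\ratringr = \bbQ[z]_{(z)}$ is the localization of $\bbQ[z]$ at the prime ideal $(z)$; it is a \emph{local} ring whose unique maximal ideal consists of those fractions whose value at $z = 0$ is $0$. An element of $\ratringr$ is therefore a unit precisely when its value at $z = 0$ is nonzero. Since $B \in \ratringr$ is defined at $z = 0$, the product $zB$ vanishes there, so the value of $1 - zB$ at $z = 0$ equals $1 \neq 0$; hence $1 - zB$ is a unit and its inverse lies in $\ratringr$. Concretely, if $B = P/Q$ with $Q(0) \neq 0$, then $1 - zB = (Q - zP)/Q$ with $(Q - zP)(0) = Q(0) \neq 0$, so the inverse $Q/(Q - zP)$ is again of the required form.

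Next I would invoke the formula $M^{-1} = \det(M)^{-1}\,[\minor_{i,j}(M)]^t$. Each minor $\minor_{i,j}(M)$ is the determinant of an $(n-1)\times(n-1)$ submatrix of $M$ whose entries lie in $\ratringr$; since the determinant of a matrix over a commutative ring is itself an element of that ring, every minor lies in $\ratringr$, and so does every entry of the transposed cofactor matrix. Multiplying this matrix by the scalar $\det(M)^{-1} \in \ratringr$, which is a unit by the previous step, keeps all entries inside $\ratringr$. Therefore $M^{-1} \in \matrices_{n,n}(\ratringr)$, as claimed.

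The argument is essentially a one-line application of Cramer's rule once the lemma is available; the only substantive point — and the step I would be most careful about — is the identification of the units of $\ratringr$ together with the observation that the $1 - zB$ shape of the determinant forces a nonvanishing value at $z = 0$. This is exactly the role the lemma plays: for a general matrix over $\ratringr$ there is no reason the determinant should be a unit, but the specific form $\Id{n} - zA$ guarantees it.
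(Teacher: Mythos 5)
Your proof is correct and takes essentially the same route as the paper's: both use the preceding lemma to write $\det(\Id{n}-zA)=1-zB$ with $B=P/Q$, observe that its inverse $Q/(Q-zP)$ lies in $\ratringr$ because the constant coefficient of the denominator is nonzero, and conclude via the cofactor formula since every minor of $\Id{n}-zA$ is itself in $\ratringr$. Your extra framing of this via the local-ring structure of $\ratringr$ (units are exactly the elements nonvanishing at $z=0$) is a clean way to phrase the key point, but the substance is identical to the paper's argument.
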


\begin{proof}
By the preceding lemma,
$\det(\Id{n}-zA)$ is of the form $1-zP(z)/Q(z)$,
where $P$ and $Q$ are polynomials such that the constant coefficient
of $Q$ is $1$, therefore $(\det(\Id{n}-zA))^{-1} = Q(z)/(Q(z)-zP(z)$
is in $\ratringr$.
All the $\minor_{i,j}(\Id{n}-zA)$ are elements of $\ratringr$,
the result follows by applying Equ.~\ref{eqn:minors_inversion}.
\end{proof}

\end{document}